\newcolumntype{Y}{>{\centering\arraybackslash}X}
\title{Shortest Distances as Enumeration Problem}
\author{Katrin Casel}{Hasso Plattner Institute, University of Potsdam, Potsdam, Germany \and \url{https://hpi.de/friedrich/people/katrin-casel.html}}{Katrin.Casel@hpi.de}{https://orcid.org/0000-0001-6146-8684}{}
\author{Tobias Friedrich}{Hasso Plattner Institute, University of Potsdam, Potsdam, Germany \and \url{https://hpi.de/friedrich/people/tobias-friedrich.html}}{Tobias.Friedrich@hpi.de}{https://orcid.org/0000-0003-0076-6308}{}
\author{Stefan Neubert}{Hasso Plattner Institute, University of Potsdam, Potsdam, Germany \and \url{https://hpi.de/friedrich/people/stefan-neubert.html}}{Stefan.Neubert@hpi.de}{https://orcid.org/0000-0001-9148-6592}{}
\author{Markus L. Schmid}{Humboldt University, Berlin, Germany \and \url{http://www.mlschmid.de/}}{MLSchmid@MLSchmid.de}{https://orcid.org/0000-0001-5137-1504}{}
\authorrunning{K. Casel, T. Friedrich, S. Neubert and M. L. Schmid}
\keywords{Enumeration, shortest paths, APSP, fine-grained complexity}
\newcommand{\maxdeg}{\Delta}
\newcommand{\avgdeg}{\overline{\Delta}}
\newcommand{\dist}{d}
\theoremstyle{definition}
\newtheorem*{problem}{Problem Definition}
\begin{document}

\maketitle

\begin{abstract}
We investigate the single source shortest distance (SSSD) and all pairs shortest distance (APSD) problems as enumeration problems (on unweighted and integer weighted graphs), meaning that the elements $(u, v, \dist(u, v))$ -- where $u$ and $v$ are vertices with shortest distance $\dist(u, v)$ -- are produced and listed one by one without repetition.
The performance is measured in the RAM model of computation with respect to preprocessing time and delay, i.\,e., the maximum time that elapses between two consecutive outputs.
This point of view reveals that specific types of output (e.\,g., excluding the non-reachable pairs $(u, v, \infty)$, or excluding the self-distances $(u, u, 0)$) and the order of enumeration (e.\,g., sorted by distance, sorted row-wise with respect to the distance matrix) have a huge impact on the complexity of APSD while they appear to have no effect on SSSD.

In particular, we show for APSD that enumeration without output restrictions is possible with delay in the order of the average degree.
Excluding non-reachable pairs, or requesting the output to be sorted by distance, increases this delay to the order of the maximum degree.
Further, for weighted graphs, a delay in the order of the average degree is also not possible without preprocessing or considering self-distances as output.
In contrast, for SSSD we find that a delay in the order of the maximum degree without preprocessing is attainable \emph{and} unavoidable for any of these requirements.
\end{abstract}
%\newpage

\section{Introduction}
\label{sec:introduction}
Computing the distance (and a corresponding shortest path) for each two vertices of a given (directed or undirected, weighted or unweighted) graph is a  fundamental computational problem.
Its theoretical research spans over more than half a century and many of the main results are nowadays part of a typical undergraduate computer science curriculum (see~\cite{cormenIntroductionAlgorithms2009}).
Nevertheless, even for such well-researched problems, a slight change in perspective can lead to new algorithmic challenges and reveal new interesting aspects.
As an example, the setting where the underlying graph is subject to constant change (e.\,g. insertions or deletions of edges) and the resulting task of maintaining the shortest distance information without recomputing it from scratch has given rise to an important and well-established research area within the field of dynamic graph algorithms (see, e.\,g., the surveys~\cite{DemetrescuItaliano2006, Martin2017arxiv}).

Our paradigm shift with respect to computing shortest distances lies in treating it as enumeration problem:
we want to enumerate all \emph{distance triples} $(u, v, \dist(u, v))$ (where $u$ and $v$ are vertices and $\dist(u, v)$ their distance) one by one, without repetitions.
The performance of algorithms solving this task is measured with respect to \emph{delay} (i.\,e., worst-case time between any two outputs, where the end-of-computation is also seen as an output) and \emph{preprocessing time} (i.\,e., additional time needed before enumeration starts).
While this general algorithmic framework has practical motivation and is heavily applied in other areas (see discussion of related work further below), we primarily use it in order to reveal new complexity aspects of shortest distance problems.
More specifically, we consider the problems \emph{single source shortest distance} (SSSD) and \emph{all pairs shortest distance} (APSD).

Many  algorithms for SSSD or APSD work by using some already computed distances in order to obtain new ones.
So, although not explicitly formulated as enumeration algorithms, such strategies nevertheless internally perform some kind of enumeration in the sense that at any point of the computation some distances are already known, while others are yet to be determined.
It is well known that the order of such hidden enumeration for shortest distances can have a crucial impact.
In particular, Dijkstra's algorithm fixes distances in increasing order and therefore is, in the comparison-model, limited by the lower bounds on sorting.

Explicitly treating SSSD and APSD as enumeration problems provides a fine-grained tool to reveal new and interesting aspects of computing shortest distances; thus, allowing for a taxonomy of shortest distance algorithms according to their enumeration behavior.
For example, our results show that even in the unweighted case the maximum degree of the graph is an upper and (unconditional) lower bound for the delay for SSSD, even if no enumeration order is required.
This bound consequently also holds for APSD if the distances are required to be enumerated ``row-wise'' (partially ordered by their first vertex).
However, if we drop this requirement for APSD, we can achieve algorithms with a delay in the order of the average degree of the graph.
While we explicitly chose to enumerate distances instead of paths, all our algorithms can be altered, without changing their performance, to also give pointers to the second vertex on a shortest path to enable recovering shortest paths from the output.

\subsection{Related Work}

In the following, $n$ denotes the number of vertices and $m$ denotes the number of edges of the input graph.

Most algorithms for SSSD on integer-weighted graphs, although they aim for a best possible total time, internally fix distances one by one.
The famous Dijkstra algorithm~\cite{dijkstraNoteTwoProblems1959}, and all its many subsequent improvements (see~\cite{older_survey} for a survey), fix distances one by one, in increasing order.
A sequence of papers achieved significant improvements by explicitly breaking from this order constraint:
Thorup developed a linear time (i.e., $O(m+n)$) algorithm for SSSD on undirected graphs with integer weights in~\cite{thorupUndirectedSinglesourceShortest1999}, and later generalized this to an almost linear time algorithm for directed graphs in~\cite{loglog}.
These ideas were later also used to develop algorithms for APSD which beat $n$ runs of Dijkstra's algorithm even for real-valued weights by Pettie and Ramachandran~\cite{real_apsp_undir} for undirected, and Pettie~\cite{apsp_reals} for directed graphs.
In a sense, these results have been achieved by changing the enumeration order.
In our enumeration point of view (which is, to the best of our knowledge, a new perspective w.r.t. shortest distance problems), we therefore also put a focus on the role of orders.

For APSD, there seem to be more algorithmic approaches that do not fix distances one at a time.
Probably the first thing that comes to mind in this regard is the classical Floyd-Warshall algorithm~\cite{floydAlgorithm97Shortest1962,warshallTheoremBooleanMatrices1962}, followed by the algorithms based on fast matrix multiplication by Seidel~\cite{seidelAllPairsShortestPathProblemUnweighted1995} and Alon~et~al.~\cite{alonExponentAllPairs1997}.
But also other approaches are more successful with non-trivial steps before fixing distances.
Chan~\cite{fast_apsp}, for example, shows how to solve APSD in unweighted, undirected graphs in $o(nm)$ by first heavily transforming the instance.
Hagerup~\cite{fast_dir_sssp} also uses a non-trivial structure he calls component tree.
For APSD enumeration, such strategies, at least at first glance, result in large preprocessing time.
We will however see a curious case where fast matrix multiplication may be used to improve the delay.

On the side of lower bounds, we like to mention the tight $\Theta(nm)$ bound on comparison-based algorithms for APSD with integer weights by Karger~et~al.~\cite{kargerFindingHiddenPath1993}.
Further, the lack of progress for APSD with general weights has lead to assuming the impossibility of significant improvement (i.\,e., a truly subcubic algorithm) in the form of the \emph{APSP conjecture}~\cite{apsp_home}.

Finally, let us discuss some aspects of the general enumeration framework adopted here.
In the literature, the term \emph{enumeration} is also used for the task of listing \emph{all} solutions to an instance of a combinatorial (optimization) problem, e.\,g., listing all maximal independent sets in a graph~\cite{is_enumeration}.
Our understanding of the term enumeration is rather ``enumerating \emph{the} solution'' instead of ``enumerating \emph{all} solutions''.
This perspective is mainly motivated by practical scenarios where a user is constantly \emph{querying} solutions for rapidly changing problem instances and where the need for seeing the (potentially large) solution in its entirety is negligible compared to the need for repeatedly accessing small portions of it very fast.

This also explains why in query evaluation in database theory this enumeration perspective is quite common and, over roughly the last decade, has even become a standard algorithmic setting, see~\cite{BerkholzEtAl2020, CaselSchmid21_conf, Segoufin2015} for more details.
Another interesting observation is that despite extensive work on enumeration problems in database theory, lower bounds on the delay are rather scarce in this area.
In fact, the few conditional lower bounds that do exist follow directly from the fact that some problems with lower bound assumptions can be encoded as getting the first element only, or computing the whole solution set; see~\cite[Section $6$]{BerkholzEtAl2020}.
In this regard, it seems noteworthy that in our study of shortest distances as enumeration problems, we are able to show an unconditional lower bound.
Note, however, that in the database theory setting we usually assume a linear time preprocessing phase as a prerequisite for the enumeration algorithms, while in this work we are primarily interested in enumeration algorithms without preprocessing and our lower bound is tailored to this setting.
Further lowering the delay by making use of a preprocessing phase is a secondary objective.

\subsection{Our Contribution}\label{sec:contribution}
In this paper we analyze enumeration of APSD for graphs with unweighted or non-negative integer weighted edges. 
We present algorithmic upper and lower bounds for five output types:

\begin{itemize}
	\item \textit{Unconstrained APSD enumeration}:
		Algorithms have to output all elements $(u, v, \dist(u, v))$ with $u, v \in V$.
		This precisely mimics the standard APSD problem.
	\item \textit{Row-wise-APSD enumeration}:
		The elements $(u, v, \dist(u, v))$ have to be enumerated sorted by the source vertices $u$.
		This is essentially equivalent to consecutive runs of SSSD enumeration.
	\item \textit{No-self-APSD enumeration}:
		The elements $(u, u, 0)$ have to be omitted in the enumeration.
	\item \textit{Reachable-APSD enumeration}:
		The elements $(u, v, \infty)$ have to be omitted in the enumeration.
		Note that such elements only exist if the graph is not (strongly) connected.
	\item \textit{Sorted-APSD enumeration}:
		The elements $(u, v, \dist(u, v))$ have to be enumerated sorted by the distance $\dist(u, v)$ in increasing order.
\end{itemize}

We analyze both the five individual types as well as all their combinations.
The resulting upper bounds are summarized in \autoref{tab:upper_bds}.
Likewise, \autoref{tab:lower-bds} gives an overview of our lower bounds, some of which are unconditional whilst others are based on popular conjectures or on limited computational models.
All bounds are given in terms of average ($\avgdeg$) and maximum ($\maxdeg$) vertex degree of the input graph.
Combinations not listed explicitly are included in one of the other types:
Adding the restrictions of the latter three output types to row-wise-APSD does not have any impact on the problem's complexity.
Furthermore, sorted-APSD is a strictly stronger requirement than reachable-APSD.

In traditional asymptotic complexity analysis, the time for initializing memory for data structures, that are fully read or written later, is negligible and thus mostly ignored.
However, this no longer applies if for computing the first output a fully prepared data structure is needed, that cannot be initialized within preprocessing and one unit of delay.
For those data structures we make use of lazy initialization (see, e.\,g., the textbook~\cite{MoretShapiroBook} and the detailed description in \autoref{sec:model}).
All our upper bounds specify the exact amount of lazy-initialized memory along with the overall space complexity.
Lazy-initialized memory could be replaced with balanced search trees at an additional cost of $\log(n)$ steps for each read or write operation.
This roughly increases the delay by a factor of $\log(n)$.
All our lower bounds are independent of available memory.

\begin{table}[tb]
	\caption{%
		Upper Bounds for APSD enumeration with different output types, where \emph{row-wise} also transfers to SSSD.
		For each type we present time bounds (order for delay, order for preprocessing) and space bounds (order of lazy-initialized memory [or ``$-$'' if none is needed], order of overall space complexity).
		Bounds in \textbf{bold} are complemented with lower bounds, listed in more detail in \autoref{tab:lower-bds}.
		Results are clickable references to the respective theorem.
	}
	\label{tab:upper_bds}
	\begin{tabularx}{\textwidth}{lYccYcc}
		\toprule
		\multirow{2}{*}{\textbf{Output Type}} &  &                                    \multicolumn{2}{c}{\textbf{Unweighted Graphs}}                                    &  &                                      \multicolumn{2}{c}{\textbf{Non-negative Edge Weights}}                                      \\
		\cmidrule{3-4}\cmidrule{6-7}          &  &                               time                               &                       space                       &  &                                     time                                     &                       space                       \\ \midrule
		unconstrained                         &  &            \hyperref[thm:u-APSD-upper]{$(\avgdeg,1)$}            &       \hyperref[thm:u-APSD-upper]{$(-,n)$}        &  &             \hyperref[thm:w-APSD-upper]{$(\avgdeg + \log(n),1)$}             &       \hyperref[thm:w-APSD-upper]{$(-,n)$}        \\
		row-wise (SSSD)                       &  &     \hyperref[thm:u-SSSD-upper]{$(\boldsymbol{\maxdeg},1)$}      &       \hyperref[thm:u-SSSD-upper]{$(n,n)$}        &  &             \hyperref[thm:w-SSSD-upper]{$(\maxdeg + \log(n),1)$}             &       \hyperref[thm:w-SSSD-upper]{$(n,n)$}        \\
		no-self                               &  &        \hyperref[thm:u-APSD-noself-upper]{$(\avgdeg,1)$}         &    \hyperref[thm:u-APSD-noself-upper]{$(-,n)$}    &  &         \hyperref[thm:w-APSD-noself-upper]{$(\avgdeg + \log(n),n)$}          &    \hyperref[thm:u-APSD-noself-upper]{$(-,n)$}    \\
		reachable                             &  &               \hyperref[cor:r-apsd]{$(\maxdeg,1)$}               &          \hyperref[cor:r-apsd]{$(n,n)$}           &  &                \hyperref[cor:r-apsd]{$(\maxdeg + \log(n),1)$}                &          \hyperref[cor:r-apsd]{$(n,n)$}           \\
		reachable \& no-self                  &  &               \hyperref[cor:r-apsd]{$(\maxdeg,n)$}               &          \hyperref[cor:r-apsd]{$(-,n)$}           &  &                \hyperref[cor:r-apsd]{$(\maxdeg + \log(n),n)$}                &          \hyperref[cor:r-apsd]{$(-,n)$}           \\
		sorted                                &  &           \hyperref[thm:u-s-APSD-upper]{$(\maxdeg,1)$}           &    \hyperref[thm:u-s-APSD-upper]{$(n^2,n^2)$}     &  &            \hyperref[thm:w-s-APSD-upper]{$(\maxdeg + \log(n),1)$}            &    \hyperref[thm:w-s-APSD-upper]{$(n^2,n^2)$}     \\
		sorted \& no-self                     &  & \hyperref[thm:u-s-APSD-noself-upper]{$(\maxdeg,\boldsymbol{n})$} & \hyperref[thm:u-s-APSD-noself-upper]{$(n^2,n^2)$} &  & \hyperref[thm:w-s-APSD-noself-upper]{$(\maxdeg + \log(n),\boldsymbol{n+m})$} & \hyperref[thm:w-s-APSD-noself-upper]{$(n^2,n^2)$} \\ \bottomrule
	\end{tabularx}
\end{table}

\begin{table}[tb]
	\caption{%
		Lower Bounds for APSD enumeration with different output format, where \emph{row-wise} also transfers to SSSD.
		Results are clickable references to the respective theorem.
		(p-c: path-comparison-based, u: unconditional, BMM: unless breakthrough in Boolean matrix multiplication)
	}
	\label{tab:lower-bds}
	\begin{tabularx}{\textwidth}{l@{\hskip 1mm}YY}
		\toprule
		\textbf{Output Type}   & \textbf{Unweighted Graphs}                                                   & \textbf{Non-negative Edge Weights}                                               \\ \midrule
		unconstrained          & -                                                                            & \hyperref[subsec:w-APSD]{$\Omega(\avgdeg)$ delay (p-c)}                          \\
		row-wise (SSSD)        &                                           \multicolumn{2}{c}{\hyperref[thm:SSSD-lower]{$\Omega(\maxdeg)$ delay (u)}}                                            \\
		no-self                & -                                                                            & \hyperref[thm:w-APSD-noself-lower]{$\Omega(\maxdeg)$ preprocessing or delay (u)} \\
		reachable (\& no-self) &                          \multicolumn{2}{c}{\hyperref[thm:r-APSD-lower]{$\omega(\avgdeg)$ delay with preprocessing in $O(n+m)$ (BMM)}}                          \\
		sorted                 &                          \multicolumn{2}{c}{\hyperref[cor:s-APSD-lower]{$\omega(\avgdeg)$ delay with preprocessing in $O(n+m)$ (BMM)}}                          \\
		sorted \& no-self      & \hyperref[thm:u-s-APSD-noself-lower]{$\Omega(n)$ preprocessing or delay (u)} & \hyperref[thm:w-s-APSD-noself-lower]{$\Omega(n+m)$ preprocessing or delay (u)}   \\ \bottomrule
	\end{tabularx}
\end{table}

Interestingly, whether or not edges are directed does not seem to alter the difficulty of these enumeration problems; all our results hold for both undirected and directed graphs.
In this regard, observe also that if we redefined the output of APSD enumeration to be distances for \emph{unordered} pairs of vertices, the delay would only increase by a factor of at most~2.
We discuss the details of this adapted definition in \autoref{sec:undirected-APSD}.

We show a tight delay of $\Theta(\maxdeg)$ for unweighted row-wise-APSD enumeration with an unconditional adversary-based lower-bound, complemented by an algorithm that meets this delay.
Surprisingly, we can beat this bound for unconstrained- and no-self-APSD enumeration, where we achieve a delay in $O(\avgdeg)$.
This means that deviation from the order in which a breadth-first search (BFS) fixes distances cannot be used to improve SSSD enumeration but does have an effect on APSD.
For sorted- and reachable-APSD, an equal improvement would be surprising, as this would provide a quadratic-time algorithm for Boolean matrix multiplication (BMM).

Generally, we solve the weighted problems with additional $O(\log(n))$ cost on the delay compared to their unweighted counterparts, which is due to the cost of minimum extraction on standard priority queues.
We discuss possible speedups through faster integer priority queues along with the respective algorithmic results in \autoref{subsec:w-SSSD} and \autoref{cor:fast-w-APSD-upper}.

In three considered cases it seems favorable to make use of additional preprocessing time.
Two cases (sorted- \& no-self- APSD enumeration) are backed up by a matching unconditional lower bound on either preprocessing or delay.
The third case (weighted no-self-APSD enumeration) leaves a gap between the $\Omega(\Delta)$ lower bound and the $O(n)$ preprocessing time required for our algorithm.
This and further open questions are summarized in \autoref{sec:conclusion}.

We start with the results for row-wise APSD enumeration (i.\,e., SSSD enumeration), in \autoref{sec:SSSD}, as these nicely illustrate the techniques and change of perspective required to approach APSD in the new light of enumeration.
Then we drop the row-wise requirement (i.\,e., we consider actual APSD enumeration), which allows more freedom to distribute the overall computational effort over the enumerated elements in a clever way and therefore leads to better delays.
We exemplify this idea in \autoref{sec:APSD} by using the trivial self-distances to initially buy a head start of computation time.
Then, in \autoref{sec:c-APSD} and \ref{sec:s-APSD}, we extend this general approach to the technically more involved results for the other output types (including APSD enumeration without self-distances).
Before diving into the first results, we formally define some notation and the computational model in the next section.

\section{Algorithmic Model and Basic Definitions}
\label{sec:model}
For all problems in this paper the input is a graph $G = (V,E)$ with $n$ vertices $V = \{ 1, \ldots, n\}$ and $|E| = m$ edges.
The degree of a vertex $v$ is denoted by $\deg(v)$ and we write $\maxdeg$ for the maximum and $\avgdeg$ for the average degree in $G$.
All proofs work on both undirected and directed graphs; by \textit{degree} we always refer to the number of outgoing edges.

Unless otherwise stated, graphs are unweighted.
In this case the shortest $s{-}t$--distance $\dist(s,t)$ is the minimum number of (directed) edge hops to reach $t$ from $s$.
For graphs with edge weights given by a weight function $w\colon E \to \{0, \ldots, n^c\}$ for some constant $c \in \mathbb{N}$, the shortest distance $\dist(s,t)$ is the minimum sum of edge weights on a (directed) path from~$s$ to~$t$.
For a vertex $t$ that is not reachable from vertex $s$ we define $\dist(s,t) = \infty$.
Note that, for both weighted and unweighted graphs, we have $\dist(s, s) = 0$ for every $s \in V$.
We denote such elements $(s, s, \dist(s, s))$ as \emph{self-distances}.

We study algorithms that \emph{enumerate} APSD in the sense that for each $s{-}t$--distance the triple $(s,t,\dist(s,t))$ is emitted exactly once.
We say that such an algorithm has \textit{delay} in $O(f)$, if both the time to compute the first or next output and the time to detect that no further output exists, is in $O(f)$.
On top of that we allow for additional \textit{preprocessing time} the algorithm can spend before the first delay starts.

Algorithms are not required to emit an output immediately after it is computed.
Quite the contrary:
Holding back outputs until the current delay expires is crucial to minimize the worst-case delay of an algorithm.
We analyze this behavior with a variant of the accounting method:
The initial account balance is in the order of the preprocessing time plus one unit of delay.
Every emitted output raises the account balance by at most the aspired delay; we say the output is \textit{charged credit} for this increase; the unit of credit being some implementation specific constant.
This credit is used to pay for all computation steps and must never become negative.
As we expect the algorithm to actively emit the respective next output, the delay (and thus all credit changes) must be computable for the algorithm.

We perform our analyzes in the word RAM model with a word size in $\Omega(\log(n))$.
This allows for all read/write operations and basic arithmetic operations such as comparison and addition of edge weights to be performed in constant time.
Input graphs are encoded as read-only adjacency lists with constant time access to the number of vertices and the degree of each vertex.
This is for example supported by the \textit{standard representation} suggested by Kammer and Sajenko~\cite{kammerLinearTimeInPlaceDFS2019a}.

Algorithms can reserve uninitialized memory within constant time.
To be able to distinguish written memory cells and uninitialized ones in a data structure $A$ one level of indirection is introduced:
The algorithm allocates auxiliary memory $B$ of twice the size of $A$ along with a counter $c$, starting at 0.
$B$ will hold pairs of pointers and actual values, the pairs being written from left to right, whilst $c$ keeps track of the number of pairs in $B$.
Instead of writing and reading from $A$ directly, memory is used as follows:
An initialized cell $A[x]$ is a pointer to a pair in $B$, which in turn holds a pointer to $A[x]$ and the actual value associated with $A[x]$.
Uninitialized memory cells in $A$ either point outside the valid area of $B$ (as specified by $c$) or do not match with the reverse pointer stored in the referenced pair.

\section{Single Source Shortest Distance Enumeration}
\label{sec:SSSD}
In this section, we consider the following enumeration problem:
\begin{problem}[Single Source Shortest Distance (SSSD) Enumeration]~\newline
	\begin{tabularx}{\textwidth}{ll}
		Input:  & Graph $G = (V, E)$, vertex $s \in V$.\\
		Output: & For each $t \in V$ the shortest distance solution $(t, \dist(s, t))$.
	\end{tabularx}
\lipicsEnd
\end{problem}

Note that running an SSSD enumeration algorithm for each vertex solves row-wise-APSD enumeration.
Conversely, to solve SSSD enumeration on the source vertex $s$ with an algorithm for row-wise-APSD enumeration, one swaps each access to vertex~1 with an access to vertex~$s$ and vice versa and stops the enumeration once the algorithm produced the complete first row of the distance matrix.
These reductions transfer all lower and upper bounds between SSSD and row-wise-APSD enumeration.
Thus, instead of analyzing row-wise-APSD, we inspect the equivalent SSSD enumeration problem. 

\subsection{Unweighted Graphs}
\label{subsec:u-SSSD}

We first consider the case of unweighted graphs, for which a single breadth-first search (BFS) started in vertex~$s$ solves SSSD in optimal total time $O(n + m)$ and with $\Theta(n)$ space. It turns out that a BFS can also be carried out in such a way that it produces the distance triples $(s, t, \dist(s, t))$ with a delay bounded by the maximum degree of the graph (\autoref{thm:u-SSSD-upper}).
Moreover, we can also show that this upper bound on the delay is essentially optimal (\autoref{thm:SSSD-lower}).

\begin{theorem}
\label{thm:u-SSSD-upper}
	Unweighted SSSD enumeration can be solved with delay in $O(\maxdeg)$ with $\Theta(n)$ lazy-initialized memory and space complexity in $\Theta(n)$.
\end{theorem}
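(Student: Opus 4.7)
The plan is to run a single breadth-first search from $s$ and emit each vertex $u$ at the moment it is dequeued, paying for the $O(\maxdeg)$ delay with the work of the corresponding adjacency scan. Concretely, I would initialize the BFS with $s$ enqueued and $\dist(s,s)=0$; whenever the enumeration is asked for the next output, dequeue a vertex $u$, iterate once over its adjacency list, and for every neighbor $v$ whose distance is still uninitialized set $\dist(s,v) = \dist(s,u)+1$ and enqueue $v$, finally emitting $(u,\dist(s,u))$. The work per output is dominated by $\deg(u) \le \maxdeg$ plus $O(1)$ for the queue operations and the output itself, so the delay stays in $O(\maxdeg)$. The distance array is realized as a lazy-initialized array of size $n$ so that ``undiscovered'' can be detected in constant time without paying an $\Theta(n)$ initialization cost.

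The main obstacle is that after the BFS queue runs empty, the algorithm still has to emit $(t,\infty)$ for every vertex $t$ not reachable from $s$. A naive sweep of $\{1,\dots,n\}$ for those vertices can place $\Theta(n)$ already-visited vertices between two unreachable outputs, which is much larger than $\maxdeg$ in general. I would resolve this by maintaining, in parallel with the BFS, a doubly linked list $L$ of not-yet-discovered vertices. Whenever the BFS discovers $v$ during a neighborhood scan, $v$ is also spliced out of $L$ in constant time. Once the queue becomes empty, $L$ is precisely the set of unreachable vertices, so they can be emitted with delay $O(1)$ by a simple linear traversal of $L$.

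The key technical step is to implement $L$ without any $\Theta(n)$ preprocessing. I would store the $\mathrm{prev}$ and $\mathrm{next}$ pointers in two lazy-initialized arrays of size $n$ and interpret an uninitialized cell at index $v$ as the default value $v-1$ or $v+1$, with $0$ and $n+1$ acting as null sentinels, so that initially $L$ is implicitly the list $(1,2,\dots,n)$. Splicing out $v$ then resolves its neighbors in $L$ via this convention and writes the new links into the arrays in constant time, while the traversal in the clean-up phase just follows the $\mathrm{next}$ pointers. Combined with an $O(n)$ queue, this gives a BFS phase with delay $O(\maxdeg)$, a clean-up phase with delay $O(1)$, $\Theta(n)$ lazy-initialized memory (the distance array and the two list-pointer arrays) and $\Theta(n)$ total space, with only constant preprocessing; in the accounting model this is certified by assigning each emission $c\cdot\maxdeg$ credit for a sufficiently large constant $c$ and observing that the per-output work never exceeds this budget.
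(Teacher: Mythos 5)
Your proposal is correct and takes a genuinely different route from the paper for the one non-trivial step, namely emitting the $(t,\infty)$ triples for the unreachable vertices. The paper keeps a solution queue $Q$, appends $(t,\dist(s,t))$ to it during the BFS, and then does a plain $\Theta(n)$ sweep over all vertices at the end; the sweep time is amortized against credit accumulated by the earlier outputs (each output is charged $\maxdeg+1$ credit, so by the time the BFS has visited the $r$ reachable vertices the balance is $\geq \maxdeg + r$, which pays for skipping them in the final sweep). In contrast, you eliminate the expensive sweep entirely by maintaining a doubly linked list of not-yet-discovered vertices, realized with two lazy-initialized pointer arrays whose uninitialized cells default to $v\pm 1$; vertices are spliced out in $O(1)$ as the BFS discovers them, so that after the BFS the unreachable vertices can be traversed and emitted with $O(1)$ delay. (You also need a separate head pointer to handle splicing the first element, since $0$ is only a sentinel, and you should splice out $s$ itself at initialization, but these are easy details.) What the two approaches buy: the paper's version is structurally lighter (one queue $Q$, no linked list), but it pays with a more careful amortization argument and with the need to defer outputs and track $\maxdeg_k$ to know when to emit; your version keeps three lazy-initialized $\Theta(n)$ arrays plus a list head, but the delay is bounded per output directly by the adjacency scan of $u$ plus $O(1)$ splice operations, so no output deferral and essentially no amortization is needed. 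Both achieve $O(\maxdeg)$ delay, $O(1)$ preprocessing, $\Theta(n)$ lazy-initialized memory and $\Theta(n)$ total space.
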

\begin{proof}
	Our modified BFS version works as follows:
	It maintains a solution queue~$Q$ to collect computed distances for later output.
	Whenever the search visits a vertex~$t$, it processes all its neighbors and, when indicated, updates their distance and enqueues them to be visited themselves later.
	Afterwards, the algorithm appends $(t, \dist(s, t))$ to~$Q$.

	After the BFS is finished, the algorithm iterates over the vertices once and appends for all unvisited vertices $v$ the solution~$(v, \infty)$ to~$Q$.
	(Note that this final step takes $\Theta(n)$ time while perhaps not producing further output; this happens if there are no unreachable vertices, which is why we actually need the solution queue.)

	During the breadth-first search, the algorithm tracks the current maximum node degree $\maxdeg_k$ seen so far.
	For some implementation specific constant $c$ and the respective current $\maxdeg_k$, it emits the next solution from~$Q$ after $c \cdot \maxdeg_k \in O(\maxdeg)$ steps.

	Whenever the algorithm is required to emit an output, $Q$ must not be empty.
	We apply the accounting method to prove this, effectively tracking how many solutions are in~$Q$ and how much computation time can be spent before a new output must be emitted.
	For each credit unit, the algorithm can perform a constant number of steps.
	As long as the credit stays positive, $Q$ is not empty.

	By induction over the number~$k$ of vertices visited by the BFS we show that (1) the account balance is always non-negative, (2) it is at least $\maxdeg + k$ after the $k$-th vertex has been visited and (3) we achieve this by charging each output with at most $\maxdeg + 1$ credit.
	For $k=0$ this is true, as the initial balance is $\maxdeg$ as specified in \autoref{sec:model}.
	Let the claims be true for some $0 \leq k \leq n-1$ and let $t$ be the $(k+1)$st vertex visited by the BFS.
	It takes $O(\deg(t))$ steps to process the neighbors of~$t$ and to put the solution $(t,\dist(s,t))$  into~$Q$, which is payed for with $\deg(t) \leq \maxdeg$ units.
	Recall that the algorithm tracks the maximum node degree $\maxdeg_{k+1}$ seen so far.
	The newly produced solution is charged $\maxdeg_{k+1} + 1$ units.
	As $\maxdeg_{k+1} \geq \deg(t)$ holds, the new credit balance is at least $\maxdeg + k + 1$ as claimed.

	When the last vertex was visited by the BFS, the account balance is at least $\maxdeg + r$, where $r$ is the number of vertices reachable from $s$.
	In the final iteration over all vertices, the algorithm uses this balance to skip the $r$ already seen vertices.
	The remaining $n-r$ solutions with distance $\infty$ directly pay for the rest of the iteration.

	As each output is charged at most $\maxdeg + 1$ credit and the account balance is never negative, the algorithm enumerates all solutions to SSSD with delay in $O(\maxdeg)$.

	The BFS uses $\Theta(n)$ lazy-initialized memory to keep track of the distance, predecessor and visited state of the vertices.
	Both the internal queue of the BFS and the solution queue $Q$ hold at most $n$ entries.
	Thus, the overall space complexity of the algorithm is in $\Theta(n)$.
\end{proof}

Before presenting the lower bound, we note that (on a connected graph) a BFS spends on average $O(\avgdeg)$ time per discovered vertex.
Consequently, the question arises whether $O(\avgdeg)$ can also be achieved as the delay in the enumeration setting.
The following lower bound, however, shows that this is not possible and that the delay of $O(\maxdeg)$ achieved by the algorithm of \autoref{thm:u-SSSD-upper} is optimal.
The intuitive idea of the lower bound is that any algorithm for SSSD enumeration can be forced to fully process first a potentially densely connected neighborhood of $s$ with many edges but few solutions, before progressing to areas of the graph where less effort is needed to produce many solutions.

\begin{theorem}
\label{thm:SSSD-lower}
	Unweighted SSSD enumeration cannot be solved with delay in $o(\maxdeg)$, even if the graph is connected and has constant average degree.
\end{theorem}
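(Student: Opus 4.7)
The plan is to prove the lower bound via an adversary argument on a family $\{G_k\}$ of connected graphs with $\maxdeg=\Theta(k)$ and $\avgdeg=O(1)$. My intended construction places the source $s$ at the center of a dense local region: $s$ is adjacent to $k$ vertices $v_1,\ldots,v_k$ that together with $s$ form a clique, and each $v_i$ starts a pendant path of length $\Theta(k)$, yielding $n=\Theta(k^2)$, $m=\Theta(k^2)$, and connectedness. The adversary commits to an adjacency-list ordering in which the unique ``exit edge'' $v_i$--$w_{i,1}$ (where $w_{i,1}$ is the first vertex on the pendant path attached at $v_i$) is placed as the last entry of $v_i$'s adjacency list, and chooses vertex identifiers that spread the $k$ path heads $w_{i,1}$ uniformly over the label range so that iterating through identifiers cannot locate them quickly either.

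Given this instance, the heart of the argument is that identifying a single $w_{i,1}$ genuinely costs $\Omega(k)$: probing the adjacency list of $s$ or of any $v_j$ at any of its first $k$ positions reveals only edges inside $\{s\}\cup\{v_1,\ldots,v_k\}$, which are already implied by the known degrees and the committed clique structure and therefore contribute no new distance information; so to witness the edge $v_i$--$w_{i,1}$ the algorithm must either scan $v_i$'s list to position $k{+}1$ (cost $\Omega(k)$ by the adversarial ordering) or try to locate $w_{i,1}$ on the ``path side'' by iterating through candidate identifiers (where only $k$ of the $\Theta(k^2)$ degree-$2$ candidates are correct, so the standard unordered-search lower bound forces $\Omega(k)$ probes in the worst case). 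Until some $w_{i,1}$ has been identified, \emph{every} triple $(w_{i',j'},1+j')$ is undetermined, because the shortest path from $s$ to any path vertex transits through a not-yet-seen edge leaving that path's $v_{i'}$. Plugging this into the accounting scheme of \autoref{sec:model}, the $k+1$ trivial outputs $(s,0),(v_1,1),\ldots,(v_k,1)$ alone provide only $O(k\cdot f(\maxdeg))$ credit, which must already cover the $\Omega(k)$ cost of the first identification, and a telescoping argument at each subsequent ``one path is done, the next path's head must be identified'' transition then yields $f(\maxdeg)\in\Omega(\maxdeg)$.

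The main obstacle I expect is ruling out a clever interleaving strategy in which the algorithm trickles scan operations from the $v_j$'s lists through the many subsequent path outputs in order to keep its output queue non-empty at all times. To close this loophole, the adversary must be strong enough to guarantee that at a decisive moment -- most naturally right after the $k+1$ trivial outputs, or at the boundary between finishing one path and beginning the next -- the algorithm's buffer of pre-identified triples is provably empty, so the next admissible output is unavailable without another $\Omega(k)$ identification round. The telescoping across the $k$ such transitions, tracked in the style of the proof of \autoref{thm:u-SSSD-upper}, is then what delivers the claimed $\Omega(\maxdeg)$ bound on the worst-case delay, and the construction has $\avgdeg=O(1)$ by design, matching the hypothesis of the theorem.
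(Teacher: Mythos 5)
Your high-level approach (adversary argument, dense clique at the source, sparse periphery, $\maxdeg=\Theta(k)$ with $\avgdeg=O(1)$) matches the paper's, but the concrete gadget is fatally different, and the ``main obstacle'' you flag is not a loose end to be tightened -- it is the gap that kills the argument. With $k$ pendant paths of length $\Theta(k)$, an algorithm attains $O(1)$ delay by pipelining: it outputs $(s,0)$, then scans $s$'s adjacency list while emitting $(v_1,1),\dots,(v_k,1)$ one per step, and in the same window also scans $v_1$'s entire list. The total budget after $k{+}1$ outputs is $(k{+}1)\cdot f(\maxdeg)$, which for any constant $f(\maxdeg)\geq 3$ already exceeds the $\Theta(k)$ cost of both scans, so $w_{1,1}$ is identified in time and $(w_{1,1},2)$ is ready when due. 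The $\Theta(k)$ outputs flowing down path~1 then bankroll the scan of $v_2$'s list, and so on; each $\Omega(k)$ identification cost is amortized over $\Theta(k)$ fresh outputs, so the buffer is never forced empty and no telescoping can recover the bound. Your accounting step ``$O(k\cdot f(\maxdeg))$ credit must already cover the $\Omega(k)$ cost'' is satisfied for every constant $f$, so the chain of inequalities yields only $f(\maxdeg)=\Omega(1)$.

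The paper avoids this by making the number of unlockable outputs per $\Omega(\maxdeg)$ of adversarial work constant rather than $\Theta(\maxdeg)$. Its gadget is a single $k$-clique containing $s$ with \emph{one} clique edge replaced by a path of $k^2$ inner vertices; all clique vertices report degree $k-1$ and all path vertices degree $2$, and the adversary defers the two clique--path attachment edges until only two clique vertices and two path vertices have any unqueried adjacency left. Crucially, while that freedom persists, the adversary can still reposition a path endpoint in a way that changes the distance from $s$ to \emph{every} path vertex, so no distance to any of the $k^2$ path vertices can be correctly committed before $\Omega(k^2)$ queries. Only $k$ cheap outputs (the clique distances) are available to absorb that cost, so $(k{+}1)\cdot f(\maxdeg)\geq\Omega(k^2)$ forces $f(\maxdeg)=\Omega(k)=\Omega(\maxdeg)$. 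To salvage your write-up you would have to replace the $k$ short pendant paths by a single long path whose attachment point remains ambiguous under the adversary -- at which point you have essentially reconstructed the paper's proof.
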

\begin{proof}
	Assume some algorithm $A$ was able to enumerate the distances with delay in $o(\maxdeg)$ and consider the following adversarial setup that is equivalent to receiving the input graph as adjacency lists:
	$A$ is allowed to ask the adversary for (a) the next neighbor of any vertex (and thereby iterate through its  adjacency list) and (b) the degree of any vertex.

	The adversary will, for arbitrary $k$, construct a graph consisting of a $k$-clique $C = \{ v_1, \ldots, v_k \}$ with the start vertex $s$ in it.
	One edge of this clique is replaced by a path of length $k^2 + 1$ with $k^2$ inner vertices $P = \{u_1, \ldots, u_{k^2}\}$.
	\Autoref{fig:SSSD-lower} depicts the graph structure.
	Note, that the graph has $|V| = k + k^2$ vertices and $\frac{k (k - 1)}{2} + k^2$ edges and thus constant average degree.

	When asked for a vertex degree, the adversary consistently reports $\deg(v) = k - 1$ for all $v \in C$ and $\deg(u) = 2$ for all $u \in P$.
	For each neighborhood query of a vertex $v \in C$, the adversary only returns neighbors within the clique, until there are only~$2$ clique vertices left, for which the algorithm has not queried the complete neighborhood.
	Let $c_1, c_2$ be these two clique vertices.
	Similarly, for each neighborhood query of a vertex $u \in P$, the adversary only returns neighbors within the path, until there are only $2$ path vertices with unseen neighborhood.
	Let $p_1, p_2$ be these two path vertices.
	Note that the algorithm has to make at least $(k - 1)^2$ queries to the adversary until it sees the first adjacent edge to one of $p_1, p_2$;
	either by asking at least $k-1$ complete neighborhoods of vertices in $C$, or at least $k^2-3$ neighborhoods of vertices in $P$.

	Now assume the algorithm emits for any  $t \in P$ a shortest distance solution $(t,\dist(s,t))$ before having made at least $(k-1)^2$ queries.
	As no adjacent edge of neither $p_1$ nor $p_2$ was fixed yet, the adversary is free to position the two vertices as shown in \autoref{fig:SSSD-lower} or instead move $p_1$ in-between $c_2$ and $p_2$ or move $p_2$ in-between $c_1$ and $p_1$.
	For at least one of those three configurations, $(t,\dist(s,t))$ is not a correct solution:
	If $\dist(s,t)$ is the correct solution for the configuration shown in \autoref{fig:SSSD-lower} and a shortest path from $s$ to $t$ w.l.o.g. enters $P$ through the vertices $c_1, p_1$, then moving $p_1$ in-between $c_2$ and $p_2$ decreases the shortest distance by~1 and thereby renders the solution invalid.
	In the second case, $\dist(s,t)$ is incorrect for the arrangement in \autoref{fig:SSSD-lower} and the adversary can simply choose this configuration.

	Thus, the algorithm needs $\Omega(k^2)$ queries to the adversary before being able to correctly determine a distance to any vertex from $P$.
	As there are only $k$ vertices outside $P$ (and hence only $k$ outputs), this results in an enumeration delay of at least $\Omega(k)$.
	The constructed graph has maximum degree $\maxdeg = k-1$, thus the delay is in $\Omega(\maxdeg)$.
	\begin{figure}[t]
		\centering
		\includegraphics{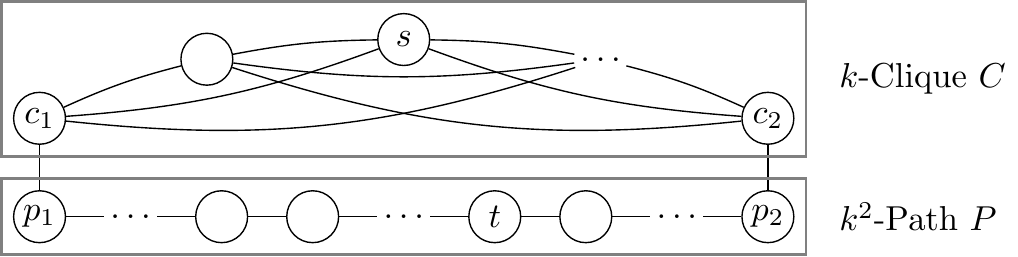}
		\caption{The adversary for \autoref{thm:SSSD-lower} constructs a clique with one edge replaced by a long path.}
		\label{fig:SSSD-lower}
	\end{figure}
\end{proof}

\subsection{Non-negative Edge Weights}
\label{subsec:w-SSSD}

Next, we consider graphs with non-negative integer edge weights.
In this setting, Dijkstra's algorithm~\cite{dijkstraNoteTwoProblems1959} using a Fibonacci Heap as priority queue~\cite{fredmanFibonacciHeapsTheir1987} solves the SSSD problem in $O(m + n \cdot \log(n))$ with $\Theta(n)$ space.
Similarly as we extended BFS to an enumeration algorithm in \autoref{thm:u-SSSD-upper}, we can extend Dijkstra's algorithm to an algorithm for SSSD enumeration for weighted graphs.

\begin{theorem}
	\label{thm:w-SSSD-upper}
	SSSD enumeration on graphs with edge weights in $\{0, \ldots, n^c\}$, for some constant $c$, can be solved with delay in $O(\maxdeg + \log(n))$  with $\Theta(n)$ lazy-initialized memory and space complexity in $\Theta(n)$.
\end{theorem}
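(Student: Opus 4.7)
The plan is to adapt Dijkstra's algorithm in the same way as the BFS was adapted in the proof of \autoref{thm:u-SSSD-upper}. I would replace the FIFO queue of the BFS with a priority queue $H$ (a Fibonacci heap~\cite{fredmanFibonacciHeapsTheir1987}) ordered by tentative distance, and maintain a separate solution queue $Q$ for outputs that are ready but held back so that the delay is smoothed out. Whenever the algorithm extracts a vertex $t$ from $H$, it settles $\dist(s,t)$, iterates once through the adjacency list of $t$ to perform the relevant \emph{decrease-key} operations on its neighbors, and then appends $(t,\dist(s,t))$ to $Q$. As in the unweighted case, after $H$ becomes empty, the algorithm makes one final pass over $V$ to append $(v,\infty)$ for every vertex that was never extracted. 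Outputs are emitted from the head of $Q$ once every $c\cdot(\maxdeg_k+\lceil\log n\rceil)$ steps for an implementation-dependent constant $c$, where $\maxdeg_k$ is the maximum degree seen so far during the simulation of Dijkstra's algorithm.

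The heart of the analysis is the same accounting argument as in \autoref{thm:u-SSSD-upper}, but now carried out with a potential function $\Phi$ that absorbs the amortized cost of the Fibonacci heap. Starting from the initial balance of $\maxdeg+\lceil\log n\rceil$ credits (the preprocessing plus one unit of delay permitted by the model), I would show by induction on the number $k$ of extracted vertices that the quantity $\text{balance}-\Phi$ is at least $\maxdeg+\lceil\log n\rceil+k$ after the $k$-th extraction, while each output is charged only $O(\maxdeg+\log n)$ credits. The amortized cost of processing the $(k{+}1)$-st extracted vertex $t$ is $O(\log n)$ for the \emph{extract-min} plus $O(\deg(t))\le O(\maxdeg)$ for the relaxations along edges leaving $t$, which fits exactly into the credit granted to the newly appended output $(t,\dist(s,t))$. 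The final phase that lists unreachable vertices is treated literally as in the unweighted case: if it produces output, each $(v,\infty)$ pays for its own constant-time work, and if not, the $r$ reachable vertices have accumulated enough credit ($\maxdeg+\lceil\log n\rceil+r$) to skip over the $r$ already visited vertices in the final sweep.

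The main obstacle I expect is the usual tension between the \emph{amortized} guarantees of the Fibonacci heap and the \emph{worst-case} nature of the delay bound. A single \emph{extract-min} can cost up to $\Theta(n)$ in the worst case; what keeps the algorithm honest is that the corresponding rise of $\Phi$ has already been paid for by earlier cheap insertions and decrease-keys. In the credit framework of \autoref{sec:model} this is handled cleanly by folding $\Phi$ into the account balance, so that the argument needs only the standard amortized bounds $O(\log n)$ for \emph{extract-min} and $O(1)$ for \emph{decrease-key}; however this step has to be spelled out carefully to convince the reader that delays really remain in $O(\maxdeg+\log n)$ and not just on average.

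For the space bound I would observe that the priority queue stores at most $n$ elements, that $Q$ also holds at most $n$ pending outputs, and that the arrays recording tentative distances, predecessors and heap pointers for the $n$ vertices can be realized with $\Theta(n)$ lazy-initialized memory exactly as in the unweighted proof. Edge weights being in $\{0,\ldots,n^c\}$ fit into $O(1)$ machine words by the assumption in \autoref{sec:model}, so comparisons and additions of distances remain constant-time operations and do not perturb the analysis.
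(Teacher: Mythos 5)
Your proposal is correct but takes a genuinely different technical route from the paper. The paper sidesteps the amortization tension you identify by choosing a different data structure: instead of standard Fibonacci Heaps, it plugs in Strict Fibonacci Heaps~\cite{brodalStrictFibonacciHeaps2012}, which give $O(\log n)$ \emph{worst-case} time for \textsc{extractMin} and $O(1)$ \emph{worst-case} time for \textsc{insert} and \textsc{decreaseKey}. With worst-case bounds, every Dijkstra iteration (extract the next vertex, then relax its at most $\maxdeg$ outgoing edges) costs $O(\maxdeg + \log n)$ outright, and the accounting argument from \autoref{thm:u-SSSD-upper} transfers with no potential-function machinery; the paper's proof is essentially a one-line substitution. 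Your route -- keeping the amortized Fibonacci Heap and folding its potential $\Phi$ into the credit balance so that the induction tracks $\text{balance}-\Phi$ and then uses $\Phi\ge 0$ -- is a nice observation that the credit model is expressive enough to absorb the potential of an amortized data structure, and it does work in principle. But it demands exactly the care you flag: the per-output credit constant must strictly dominate the hidden constants in the amortized heap bounds, and you must check that the raw balance (not only $\text{balance}-\Phi$) stays non-negative mid-operation while a $\Theta(n)$-cost consolidation or cascading cut is being paid off by a commensurate potential drop that has been banked during earlier cheap insertions and decrease-keys. The paper's choice of a worst-case structure yields the identical delay bound without this extra work; your argument, conversely, shows the enumeration-with-accounting framework is robust enough to accommodate amortized data structures directly.
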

\begin{proof}
	Our enumeration algorithm replaces BFS with Dijkstra's algorithm in \autoref{thm:u-SSSD-upper}.
	By using a priority queue such as Strict Fibonacci Heaps \cite{brodalStrictFibonacciHeaps2012} with $O(\log(n))$ worst case time for \textsc{extractMin} and $O(1)$ worst case time for \textsc{makeHeap}, \textsc{insert} and \textsc{decreaseKey}, we get a delay of $O(\log(n))$ for determining the next vertex $t$ and its shortest distance $\dist(s,t)$, and additionally $\maxdeg \cdot O(1)$ delay for processing the neighborhood of $t$.
\end{proof}

We next argue that this algorithm is an optimal enumeration variant of Dijkstra's algorithm with a Fibonacci Heap.
First, we note that the lower bound in \autoref{thm:SSSD-lower} also applies for weighted graphs.
Consequently, if $\maxdeg \in \Omega(\log(n))$, then the delay $O(\maxdeg + \log(n)) = O(\maxdeg)$ is optimal.
If, on the other hand, the maximum degree is smaller, then our enumeration version of Dijkstra's algorithm has delay $O(\log(n))$, but also the number of edges is in $O(n \log(n))$, which means that in the non-enumeration setting, Dijkstra's algorithm also needs on average $O(\log(n))$ time per computed distance triple.

There are several algorithms for SSSD with non-negative integer weights that outperform Dijkstra's algorithm in the theoretical analysis, such as Thorup's algorithm with a total runtime of $O(m + n(\log\log(n))^{1+\varepsilon})$ in our setting.
However, it is not clear whether such improvements indicate the existence of enumeration algorithms with smaller delay for SSSD, as they are usually based on priority queues with much higher initialization overhead.
In contrast, such algorithms \emph{can} be used to enumerate APSD faster, as explained in \autoref{subsec:w-APSD}.

\subsection{Constrained SSSD Enumeration}

Any standard SSSD algorithm would produce as output an array of length $n$ that includes the shortest distance from $s$ to all vertices of the input graph, including $s$ itself and all vertices unreachable from~$s$.
For graphs with only non-negative edge weights the former distance is always $0$ (independent of the input graph) and can therefore considered negligible.
The distances to unreachable vertices are all $\infty$ and, while the actual unreachable vertices obviously do depend on the input graph, they may be completely omitted from the output, since they are nevertheless implicitly given by omission (i.\,e., if we know that distances to unreachable vertices are not part of the output, they are all still uniquely represented by the output).
In addition to omitting redundant information in the output, we could also require the distance triples to be sorted with respect to the distance.
The question arises how these modifications influence the upper and lower delay bounds shown so far.

In fact, the next corollary states that the above mentioned modifications do not change our upper delay bounds (and this also holds for all possible combinations of them):
Regarding the omission of self-distances or of distance triples of unreachable vertices, our algorithms can just skip the single self-distance by doubling the delay and simply stop enumerating when all reachable vertices have been fully processed.
Moreover, the requirement of producing the distance-triples sorted by the distance is already fulfilled by both BFS and Dijkstra's algorithm.
With the terminology defined in \autoref{sec:contribution} this can be summarized as follows.

\begin{corollary}
	\label{cor:c-sssd}
	No-self-, reachable- and sorted-SSSD enumeration as well as all combinations of the three output types can be solved with delay in $O(\maxdeg)$ for unweighted graphs and $O(\maxdeg + \log(n))$ for graphs with edge weights in $\{0, \ldots, n^c\}$, for some constant $c$.
	Both results use $\Theta(n)$ lazy-initialized memory and have an overall space complexity of $\Theta(n)$.
\end{corollary}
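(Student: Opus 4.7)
The plan is to start from the enumeration variants of BFS and Dijkstra given by \autoref{thm:u-SSSD-upper} and \autoref{thm:w-SSSD-upper} and observe that each of the three constraints can be accommodated by a tiny local modification that preserves the accounting analysis.

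For the \emph{sorted} requirement, I would not change the algorithms at all. In BFS, vertices are discovered in layers of non-decreasing distance, and in Dijkstra's algorithm the minimum-key extraction from the priority queue yields vertices in non-decreasing order of $\dist(s, \cdot)$. Since the enumeration versions append $(t, \dist(s,t))$ to the solution queue $Q$ exactly in the order in which $t$ is finalized, emitting from $Q$ in FIFO order already produces a distance-sorted sequence. If sorted output is combined with the reachable constraint, the unreachable vertices, which would otherwise be appended last with distance $\infty$, simply do not appear, so the remaining reachable outputs are still in sorted order.

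For the \emph{reachable} requirement, I would drop the final $\Theta(n)$ scan over unvisited vertices from the proof of \autoref{thm:u-SSSD-upper} (and its analogue for the weighted case). Once the main BFS/Dijkstra loop terminates, the algorithm flushes the remaining entries of $Q$ and then signals end-of-enumeration. No credit for $\infty$-outputs needs to be earned, and the solution queue $Q$ is still drained within the current delay budget, because every element in $Q$ was already charged $\maxdeg + 1$ (resp.\ $\maxdeg + \log(n) + O(1)$) credits when it was inserted, exactly as in the original analysis.

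For the \emph{no-self} requirement, I would suppress the output $(s, 0)$ that corresponds to visiting $s$ itself (which is the first element placed on $Q$). Because this happens at most once and can be handled by inspecting the head of $Q$ during the emission step, it adds at most a constant to the delay bound. Combining no-self with the other two constraints is immediate: the composition of ``skip the one self-distance'' with ``stop when the main loop is empty'' with ``emit in the order produced'' does not interfere, and the credit accounting remains valid since we only \emph{remove} outputs and \emph{remove} a $\Theta(n)$ post-processing scan, never asking for extra work per surviving output. The space bounds of $\Theta(n)$ lazy-initialized memory and overall space carry over verbatim, since none of the data structures (priority queue, solution queue, distance/predecessor arrays) is enlarged. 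The only mildly delicate point is confirming that in the \emph{reachable} variant the algorithm can recognize the empty solution queue as the end of enumeration and return this signal within $O(\maxdeg)$ (resp.\ $O(\maxdeg+\log n)$) steps; this holds because detecting an empty BFS/Dijkstra queue and an empty $Q$ is constant time.
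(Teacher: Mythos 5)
Your proposal takes essentially the same approach as the paper: the three constraints are handled by tiny local modifications to the BFS/Dijkstra enumeration algorithms of Theorems~\ref{thm:u-SSSD-upper} and~\ref{thm:w-SSSD-upper}, and the paper's own justification (the paragraph preceding the corollary) is a brief version of exactly your three observations. Your treatment of the \emph{sorted} and \emph{reachable} cases is correct, including the observation that dropping the final $\Theta(n)$ scan for unreachable vertices is safe because that scan was paid for by surplus credit that is no longer needed.

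The only imprecise step is the \emph{no-self} case. You argue that since the skip ``can be handled by inspecting the head of $Q$ during the emission step, it adds at most a constant to the delay bound.'' But inspecting the head being $O(1)$ does not by itself guarantee that $Q$ is non-empty after the skip: when the first scheduled emission comes due, $Q$ may contain only $(s,0)$, because $s$'s own processing has just finished and no neighbor has yet been fully processed. The paper's fix is to double the delay, i.e.\ use a constant-\emph{factor} enlargement so that by the first emission deadline the BFS/Dijkstra has processed at least one non-self vertex. Your asymptotic conclusion $O(\maxdeg)$ (resp.\ $O(\maxdeg+\log n)$) is still correct, since doubling is absorbed in the $O(\cdot)$, but the justification should explicitly address this emptiness concern rather than appeal only to the cost of the head inspection.
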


With respect to the lower bounds, we can first observe that the lower bound from \autoref{thm:SSSD-lower} still applies if we omit self-distances, or distances of unreachable vertices, or both.
For sorted enumeration, we inherit an $\Omega(\log(n))$ lower bound on the delay in the addition-comparison model from the $\Omega(n \log(n))$ lower bound for comparison-based sorting, as enumerating the distances of a star center to its spikes sorted by distance is equivalent to sorting the edge weights.

\begin{corollary}
	\label{cor:w-sorted-sssd-lower}
	In the addition-comparison model, no algorithm can solve sorted-SSSD enumeration on graphs with non-negative edge weights with preprocessing in $o(n\log(n))$ and delay in $o(\log(n))$.
\end{corollary}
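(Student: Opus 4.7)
The plan is a reduction from comparison-based sorting to sorted-SSSD enumeration, instantiated on a star graph. Given $n$ non-negative integers $w_1, \ldots, w_n$ that we wish to sort in the addition-comparison model, I would construct a star $G$ with center $s$ and leaves $t_1, \ldots, t_n$, where the edge $\{s, t_i\}$ is assigned weight $w_i$. Since this is a tree, the shortest distance from $s$ to $t_i$ is exactly $w_i$, so producing the triples $(s, t_i, \dist(s, t_i))$ in non-decreasing order of the distance component yields the $w_i$ in sorted order.

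Next, I would argue the contradiction by a direct time accounting. Suppose for contradiction that there is an algorithm $A$ solving sorted-SSSD enumeration with preprocessing in $o(n \log(n))$ \emph{and} delay in $o(\log(n))$. Run $A$ on the star graph above. The construction of the adjacency-list input from the numbers $w_1, \ldots, w_n$ can be done with no comparisons and no additions, so it does not contribute to the sorting cost. Then $A$ emits exactly $n$ distance triples (plus the end-of-enumeration signal), so the overall running time is at most the preprocessing time plus $(n+1)$ times the delay, which is $o(n\log(n)) + (n+1) \cdot o(\log(n)) = o(n \log(n))$. Reading the distance components in the order they are emitted gives the sorted sequence of inputs, yielding a comparison-addition sorting algorithm running in $o(n\log(n))$ time, contradicting the classical $\Omega(n \log(n))$ lower bound for comparison-based sorting.

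The only real subtlety is to make sure the reduction is carried out entirely within the addition-comparison model, so that the $\Omega(n \log(n))$ lower bound for sorting actually applies to the simulated algorithm. This is unproblematic here: building the adjacency-list representation of a star from the weights requires only pointer manipulations and copies (no comparisons or additions on the weights), and recovering the sorted order from the emitted triples is likewise free. Hence the lower bound on preprocessing-plus-delay follows immediately, establishing the corollary.
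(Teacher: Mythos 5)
Your proposal is correct and follows essentially the same argument the paper gives: a reduction from comparison-based sorting via a weighted star, noting that sorted-SSSD output from the center recovers the edge weights in sorted order, so preprocessing $o(n\log n)$ plus $O(n)$ outputs at delay $o(\log n)$ would sort $n$ keys in $o(n\log n)$ comparison-addition time, contradicting the classical lower bound. The only (harmless) slip is the count of emitted triples -- the star has $n+1$ vertices and the unconstrained sorted output also includes the self-distance $(s,s,0)$, so there are $n+1$ or $n+2$ triples rather than $n$, which of course does not affect the asymptotics.
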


In summary, omitting the redundant self-distances or distances to unreachable vertices, or requiring sorted enumeration has no influence on SSSD enumeration, but, as we shall see in Sections~\ref{sec:c-APSD} and \ref{sec:s-APSD}, the situation is quite different for APSD enumeration.

\section{Unconstrained All Pairs Shortest Distance Enumeration}
\label{sec:APSD}

For APSD enumeration, we require that all entries of the distance matrix are listed:

\begin{problem}[All Pairs Shortest Distance (APSD) Enumeration]~\newline
	\begin{tabularx}{\textwidth}{ll}
		Input:  & Graph $G = (V, E)$.\\
		Output: & For each $s, t \in V$ the shortest distance solution $(s, t, \dist(s, t))$.
	\end{tabularx}
\lipicsEnd
\end{problem}
Again, we start with unweighted graphs and then extend to non-negative edge weights.

\subsection{Unweighted Graphs}
\label{subsec:u-APSD}

Since solving the APSD problem reduces to solving SSSD for each vertex $s \in V$ as start vertex, it is obvious that Theorem~\ref{thm:u-SSSD-upper} extends to APSD, i.\,e., unconstrained APSD enumeration is possible with delay $O(\maxdeg)$.
However, such an algorithm just performs $n$ breadth-first searches in total time $O(n^2 + nm)$ and therefore spends on average $O(\avgdeg)$ time per produced distance triple.
Hence, just like discussed for SSSD after the proof of Theorem~\ref{thm:u-SSSD-upper}, the question arises whether $O(\avgdeg)$ can also be achieved as delay for APSD enumeration.
While we answered this question in the negative for SSSD by presenting the lower bound of Theorem~\ref{thm:SSSD-lower}, we can actually achieve delay $O(\avgdeg)$ for APSD enumeration (this means that for APSD enumeration, we can do better than just running the SSSD enumeration algorithm $n$ times).

\begin{restatable}{theorem}{thmUApsdUpper}
	\label{thm:u-APSD-upper}
	Unweighted APSD enumeration can be solved with delay in $O(\avgdeg)$ and with space complexity in $\Theta(n)$.
\end{restatable}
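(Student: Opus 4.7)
The plan is to exploit the $n$ trivial self-distances as a prepaid credit reserve and then run $n$ breadth-first searches in sequence, one per source, so that the amortised work per triple stays on the order of $\avgdeg$. Concretely, the algorithm first emits the $n$ triples $(s,s,0)$ for $s = 1, \dots, n$; at a claimed delay of $c\avgdeg$ per emission this immediately accrues an initial balance of $\Theta(n\avgdeg) = \Theta(n+m)$, which happens to match the total work of a single BFS. Along the way the auxiliary arrays (a BFS queue, a distance array and a visited array of size $n$) are eagerly initialised, whose cost is absorbed by the same reserve, so no lazy initialisation is needed.

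For each source $s$, the algorithm then runs a BFS that appends $(s, t, \dist(s,t))$ to a solution queue $Q$ whenever a reachable $t \neq s$ is fixed, followed by a linear scan appending $(s, t, \infty)$ for every vertex not touched by the BFS. Between sources, only the distance/visited cells actually written during the BFS are reset, which is tracked by a small side stack of writes, so no global clearing is ever performed. Outputs are popped from $Q$ every $c\avgdeg$ steps; the accounting method reduces correctness of the delay bound to verifying that the credit balance stays non-negative at all times.

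That verification has two parts. The total work of one source round (BFS, final scan, and reset) is $O(n+m)$, while the $n-1$ non-self triples it emits earn $(n-1)\cdot c\avgdeg = \Omega(c(n+m))$ credit; by choosing $c$ larger than the hidden implementation constants, the net credit change per round is non-negative, so after each complete round the balance is still at least the initial $\Theta(n+m)$. Within a single round the balance may temporarily dip by at most the full BFS work $O(n+m)$, the worst case being that all high-degree vertices sit at the front of the BFS before any output is emitted, but this is exactly the size of the self-distance reserve. A short induction over source indices, and inside each source over the BFS step, then keeps the balance non-negative throughout. Since $Q$ holds at most $n$ entries at a time, the overall space is $\Theta(n)$ as claimed.

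The main obstacle is the per-source $O(n)$ scan for unreachable vertices: in a (strongly) connected graph it produces no output, and if charged against its own BFS in isolation it would violate the delay bound whenever $\avgdeg \ll n$. The self-distance reserve, together with the observation that the reachable-side emissions of the same round contribute enough credit to cover the scan, is the decisive balancing trick; once this is in place the remaining amortisation is routine.
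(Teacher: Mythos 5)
Your plan is essentially the paper's: prepay with the $n$ self-distances, run BFS from each source, and amortize so that each emitted triple is charged $O(\avgdeg)$, covering the full $O(n+m)$ cost of one source round including the final scan for $\infty$-distances. The accounting framework, the role of the self-distance reserve, and the space bound via reuse of the $\Theta(n)$ BFS arrays all match the intended proof. Your side-stack reset is a fine alternative to the paper's two alternating scratch arrays.

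There is, however, one genuine missing step. You open by asserting that the $n$ self-distances are emitted ``at a claimed delay of $c\avgdeg$ per emission,'' but the paper's model requires the delay (and every credit charge) to be \emph{computable by the algorithm} at the moment the output is emitted, and $\avgdeg$ is not available at time $0$ -- computing it requires an $\Omega(n)$ pass over the degree information, while the stated preprocessing budget in this row of the results table is $O(1)$. If $\avgdeg$ is small, the algorithm cannot afford to first learn $\avgdeg$ and only then start emitting. The paper resolves this bootstrap problem by emitting the first $n/2$ self-distances with \emph{constant} delay while summing the degrees in parallel; only after $\avgdeg$ is known does it switch to charging the remaining $n/2$ self-distances $\Theta(\avgdeg)$ each. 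Without this two-phase start, the account may go negative (and, more fundamentally, the schedule is ill-defined). Relatedly, the arithmetic ``$(n-1)\cdot c\avgdeg = \Omega(c(n+m))$'' needs $\avgdeg+1$ rather than $\avgdeg$ to be valid when $m < n$ (the graph need not be connected, so $m$ can be arbitrarily small); the paper charges $\avgdeg + 1$ per triple for exactly this reason. Both fixes are local, but the first one is a real idea you would have had to supply, not merely a wording issue.
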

\begin{proof}
	Our algorithm maintains a queue $Q$ of known solutions from which it one by one emits a shortest distance whenever the current delay comes to an end.
	As first step, the algorithm iterates over all vertices, appending to $Q$ all solutions with distance 0.
	All longer distances are found and appended to $Q$ through breadth-first searches on all vertices.
	Because the algorithm initially does not know the average degree $\avgdeg$, it emits the first $\frac{n}{2}$ solutions in~$Q$ with constant delay and in parallel computes $\avgdeg$ by summing up node degrees and dividing by $n$.
	From then on, all further entries in $Q$ will be emitted with $\Theta(\avgdeg)$ delay.

	We again apply the accounting method to prove that $Q$ is never empty when a solution has to be emitted.
	Clearly, the first $\frac{n}{2}$ solutions (distance 0) can be produced and emitted with constant delay.
	We assign a payment of 3 to each of those solutions, whilst the actual cost is 1, leaving a credit of $n$ which is used to compute $\avgdeg$ (recall that we assume constant access to the degree of a vertex, thus $\avgdeg$ can be computed in $O(n)$ time).
	The remaining $\frac{n}{2}$ trivial solutions are also produced with an actual cost of 1 but are in addition charged~$4 \cdot (\avgdeg + 1)$, resulting in $2n(\avgdeg + 1)$ credit.
	All further solutions produced by the breadth-first searches are assigned a cost of $\avgdeg$.

	We now show that the algorithm can run BFS on all vertices without running out of credit.
	For this we claim that, after the algorithm has performed a breadth-first search for each start vertex in the set $U \subseteq V$, the credit balance is at least $B(U) = |V \setminus U| \cdot (\avgdeg + 1) + m + n$.

	Recall that $m = n \cdot \avgdeg$, thus the claim is true for $U = \emptyset$.
	Assuming the claim holds for any $U \subset V$ we inspect a breadth-first search on any vertex $s \in V \setminus U$.
	As the search costs $m+n$ and we have $B(U) > m + n$, the BFS can be performed without the solution queue running empty in-between.
	The search produces $n - 1$ new solutions which are each charged with a cost of $\avgdeg + 1$.
	(Additionally, it skips one solution with distance $0$ that has already been put into $Q$ during initialization.)
	Thus we get $B(U \cup \{s\}) = B(U) - (m + n) + (n - 1) \cdot (\avgdeg + 1)$ as claimed.

	As each solution is assigned a cost in $O(\avgdeg)$, the algorithm solves APSD enumeration with delay in $O(\avgdeg)$.

	By alternating between two sets of data structures for the breadth-first searches, only $\Theta(n)$ memory is needed for all BFS runs.
	During each BFS the algorithm can prepare the memory needed for the subsequent search; same holds for the first BFS whose memory can be prepared during the initial pass over the vertices.
	Note that the solution queue $Q$ never has to hold more than $O(n)$ solutions, as for all $U \subseteq V$ we have $B(U) / \avgdeg \in O(n)$.
	Thus, the algorithm has a space complexity of $\Theta(n)$.
\end{proof}

In the proof of \autoref{thm:u-APSD-upper}, we used the distance triples $(v,v,0)$ in order to get a head start that allowed us to carry out the BFSs in such a way that the delay of $O(\avgdeg)$ is maintained.
But just like the distance triples $(v,v,0)$ can be used for this purpose, we could also use the trivial distance triples $(u,v,1)$ that are directly given by the edges $(u, v) \in E$.
Assuming that we aim for maintaining a delay bound of $D$, this would give us a head start of $\Theta(m \cdot D)$ computation time.
Now for dense graphs, $m$ may be large enough such that we can fully solve the APSD problem within time $O(m \cdot D)$ by means of fast matrix multiplication (FMM), for example with the algorithm by Seidel \cite{seidelAllPairsShortestPathProblemUnweighted1995}, and afterwards emit the solutions with distance $> 1$ with constant delay.
This observation is formally stated in the next remark.
\begin{remark}
	\label{rm:fmm-APSD}
	Unweighted APSD enumeration can be solved with delay in $O(\min(\avgdeg, \frac{n^\omega\log(n)}{m}))$, where $\omega < 2.37286$  is the matrix multiplication constant~\cite{almanRefinedLaserMethod2021}.
\end{remark}

\subsection{Non-negative Edge Weights}
\label{subsec:w-APSD}

We can tackle non-negative edge weights by again replacing BFS with Dijkstra's algorithm.
In comparison to \autoref{thm:u-APSD-upper}, we only need  to adjust the credit payments assigned to the solutions to match the aspired delay.

\begin{restatable}{theorem}{thmWApsdUpper}
	\label{thm:w-APSD-upper}
	APSD enumeration on graphs with edge weights in $\{0, \ldots, n^c\}$, for some constant $c$, can be solved with delay in $O(\avgdeg + \log(n))$ and with space complexity in $\Theta(n)$.
\end{restatable}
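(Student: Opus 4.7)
The plan is to replay the structure of the proof of \autoref{thm:u-APSD-upper} almost verbatim, swapping the breadth-first searches for enumeration-style Dijkstra runs as set up in \autoref{thm:w-SSSD-upper}, and then adjusting the credit bookkeeping so that the extra $\log(n)$ term per vertex in Dijkstra's cost is absorbed.

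First, I would describe the algorithm: maintain a solution queue $Q$; iterate once over $V$ appending all self-distances $(v,v,0)$ to $Q$; then for each source $s \in V$ run a Dijkstra computation that uses a Strict Fibonacci Heap~\cite{brodalStrictFibonacciHeaps2012} (constant worst-case \textsc{makeHeap}, \textsc{insert}, \textsc{decreaseKey} and $O(\log n)$ worst-case \textsc{extractMin}), appending each finalized triple $(s,t,\dist(s,t))$ to $Q$ as it is extracted, and at the end sweeping once over the vertex array to append $(s,t,\infty)$ for every vertex $t$ that was not settled. As in \autoref{thm:u-APSD-upper}, two alternating copies of the Dijkstra data structures are used so that the memory for the next run can be lazy-initialized during the current run, keeping the space at $\Theta(n)$. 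The first $n/2$ entries in $Q$ are emitted with constant delay while $\avgdeg$ is computed by summing node degrees and dividing by $n$; from then on every output is emitted with delay $c(\avgdeg + \log n)$ for an implementation-specific constant $c$.

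Now the accounting. As before, the first $n/2$ self-distances carry enough surplus credit to pay for the $O(n)$ computation of $\avgdeg$. The remaining $n/2$ self-distances, each charged $\Theta(\avgdeg + \log n)$ while costing only $O(1)$ to emit, deposit a starting credit of $\Theta(n(\avgdeg+\log n)) = \Theta(m + n\log n)$. I would then prove by induction on the set $U$ of sources already processed that after every Dijkstra round the balance is at least $B(U) = |V \setminus U|\cdot(\avgdeg + \log(n) + 1) + \Theta(m + n\log n)$. A single round costs $O(m + n\log n)$ time in total (Dijkstra with a Strict Fibonacci Heap plus the final $O(n)$ pass for unreachable vertices), and produces $n-1$ new triples, each charged $\Theta(\avgdeg + \log n)$; since $(n-1)(\avgdeg+\log n) = \Theta(m + n\log n)$, the inductive hypothesis is preserved. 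The fact that $B(U)$ always exceeds the cost of one entire Dijkstra round is what guarantees that $Q$ is never empty when the next output is demanded, even though Dijkstra produces its outputs gradually rather than at amortized cost.

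The main obstacle is purely the accounting: verifying that the Strict Fibonacci Heap really delivers \textbf{worst-case} (not amortized) $O(\log n)$ \textsc{extractMin} and $O(1)$ for the other operations, so that the per-solution charge of $O(\avgdeg + \log n)$ is enough to pay for each individual step, not just for the full Dijkstra run in aggregate. Assuming this, since $|Q| \le n$ throughout (the invariant $B(U)/(\avgdeg+\log n) \in O(n)$ carries over), and since lazy initialization of the Dijkstra scratch space fits into $\Theta(n)$, the overall space complexity remains $\Theta(n)$, which completes the argument.
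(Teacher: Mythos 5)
Your proposal is correct and follows essentially the same route as the paper: replace BFS by Dijkstra with a Strict Fibonacci Heap, keep the self-distance head start, and re-run the accounting of \autoref{thm:u-APSD-upper} with the per-solution charge raised to $\Theta(\avgdeg+\log n)$, which gives exactly the invariant $B(U)=|V\setminus U|\cdot(\avgdeg+\log n)+m+n\log n$ the paper uses. One small remark: your worry about worst-case versus amortized heap bounds is more cautious than necessary here, since the credit argument is charged against the \emph{aggregate} cost of a whole Dijkstra run (kept non-negative by the $m+n\log n$ buffer in $B(U)$) rather than against individual heap operations; worst-case bounds matter more in the SSSD setting of \autoref{thm:w-SSSD-upper}, where the per-output budget must cover each \textsc{extractMin} directly.
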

\begin{proof}
	If BFS is replaced by Dijkstra's algorithm in \autoref{thm:u-APSD-upper}, the analysis needs to be changed as follows:

	The second half of the trivial solutions is charged, in addition to the constant cost for producing it, with $4(\avgdeg + \log(n))$, resulting in $2n(\avgdeg + \log(n))$ units on the account.
	We now have $B(U) = |V \setminus U| \cdot (\avgdeg + \log(n)) + m + n\log(n)$.
	For $U \subset V$ this credit is enough to fully run Dijkstra's algorithm with cost $m + n \log(n)$, producing $n-1$ new solutions which are each charged with $\avgdeg + \log(n)$ units.
	Thus, $B(U)$ is non-negative for all $U \subseteq V$ and the algorithm has delay in $O(\avgdeg + \log(n))$.
	The space analysis from \autoref{thm:u-APSD-upper} applies without changes, resulting in $\Theta(n)$ space complexity.
\end{proof}

In case $\avgdeg \in \Omega(\log(n))$ holds, this delay is optimal in the path-comparison model, as it matches the total time lower bound of $\Omega(mn)$ shown by Karger~et~al.~\cite{kargerFindingHiddenPath1993}.
For other models one can use improved SSSD algorithms instead of Dijkstra's algorithm to achieve a better delay.
If one allows the full power of word RAM \emph{including} multiplication this goes as far as using Thorup's $O(m)$ shortest paths algorithm for non-negative integer weights \cite{thorupUndirectedSinglesourceShortest1999}, which gives a delay of $O(\avgdeg)$ for APSD enumeration on undirected graphs.

\begin{corollary}
	\label{cor:fast-w-APSD-upper}
	APSD enumeration on graphs with edge weights in $\{0, \ldots, n^c\}$, for some constant $c$, can be solved with delay in $O(\frac{1}{n} \cdot T)$, where $T$ is the runtime of an SSSD algorithm and is computable in $O(n)$ time.
\end{corollary}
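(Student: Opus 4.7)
The plan is to generalize the argument of \autoref{thm:w-APSD-upper} by replacing Dijkstra's algorithm with an arbitrary SSSD algorithm of runtime $T$, while leaving the overall structure of the accounting argument intact. The target delay is $D = \Theta(T/n)$, and since any SSSD algorithm must produce $n$ distance values, we have $T \in \Omega(n)$ and hence $D \in \Omega(1)$, so constant-time bookkeeping steps fit within one unit of delay.

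The algorithm maintains a solution queue $Q$. As first step it iterates over the $n$ vertices and appends the trivial self-distance triples $(v,v,0)$ to $Q$, while in parallel computing the value $T$ in $O(n)$ time (possible by hypothesis). As in the proof of \autoref{thm:u-APSD-upper}, the first $n/2$ self-distances are emitted with constant delay, with each such emission charged a small constant amount of credit; the surplus (in $\Theta(n)$) finances the $O(n)$-time computation of $T$. After this, the algorithm knows $T$ and therefore the target delay $D = \Theta(T/n)$, so the remaining $n/2$ self-distances are each charged $\Theta(T/n)$ credit, generating a head start of $\Theta(T)$ on the account. Subsequently, the algorithm runs the chosen SSSD algorithm once from each source vertex $s \in V$; each run performs $T$ steps and yields $n-1$ non-self distance triples that are appended to $Q$ and each charged $\Theta(T/n)$ credit.

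The accounting invariant is that after $k$ SSSD runs the balance is at least $T + (n-k)\cdot c\cdot(T/n)$ for an implementation-specific constant $c$, which suffices to pay for the next SSSD invocation and leaves enough to charge its $n-1$ outputs. Memory for consecutive SSSD runs can be alternated as in \autoref{thm:u-APSD-upper}, keeping per-instance space within the bound of the underlying SSSD algorithm (space is not tracked in the statement). Combining these per-output charges with the constant-time charges of the first $n/2$ self-distances shows a delay bound of $O(T/n)$.

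The main subtlety is the bootstrapping phase: since the algorithm does not initially know $T$, it cannot immediately set its delay to $\Theta(T/n)$; the hypothesis that $T$ is $O(n)$-computable together with the ``free'' head start of $n$ trivial self-distances is exactly what closes this gap, in direct analogy with the way $\avgdeg$ is computed in \autoref{thm:u-APSD-upper}. Once $T$ is known and the $\Theta(T)$ head start is assembled, the standard inductive accounting argument carries over verbatim from \autoref{thm:w-APSD-upper}, replacing $m + n\log(n)$ by $T$ and $\avgdeg + \log(n)$ by $T/n$.
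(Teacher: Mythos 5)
Your proof is correct and takes essentially the same approach as the paper implies: the corollary is obtained from \autoref{thm:w-APSD-upper} by substituting an arbitrary SSSD routine of runtime $T$ for Dijkstra, replacing the per-run cost $m + n\log(n)$ by $T$ and the per-output charge $\avgdeg + \log(n)$ by $T/n$, and using the $O(n)$-computability of $T$ to bootstrap the aspired delay during the free head start provided by the $n$ self-distances. Your identification of why the hypothesis ``$T$ is $O(n)$-computable'' is needed (so the algorithm can set its own delay) is exactly the right reading of the statement.
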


\section{Constrained All Pairs Shortest Distance Enumeration}
\label{sec:c-APSD}

Standard APSD algorithms produce a complete distance matrix and our previous enumeration algorithms mimic this by enumerating the values of all matrix cells in the form of all distance triples $(u, v, d(u, v))$.
In the following, we consider the output restrictions discussed at the beginning of Section~\ref{sec:contribution}, i.\,e., no-self-APSD enumeration (where the self-distances $(u, u, 0)$ are omitted) and reachable-APSD enumeration (where distance triples $(u, v, \infty)$ are omitted).

\subsection{No-Self-APSD Enumeration}

In contrast to the SSSD enumeration with delay in $\Omega(\maxdeg)$, the APSD enumeration algorithms of Section~\ref{sec:APSD} buy some head start time by emitting all entries on the main diagonal of the matrix; which are the solutions that indicate the distance of a vertex to itself.
For graphs with negative edge weights these entries are important, as they indicate whether the graph contains a cycle of negative total weight.
With none or only non-negative weights however, they do not depend on the input graph and are thus somewhat uninteresting for the receiver of the enumerated solutions.
So, what happens, if only distances between pairwise different vertices are to be emitted -- in other words, what delay can we achieve for no-self-APSD enumeration?

In the unweighted case, each edge $(u,v) \in E$ of the input graph corresponds to a shortest distance solution $(u,v,1)$; we have used this before in \autoref{rm:fmm-APSD} for solving unweighted APSD enumeration via FMM.
As it takes $\Omega(n)$ steps to find the first edge in the worst case, this is not enough to ensure a delay in $O(\avgdeg)$.
However, vertices with (out-)degree 0 admit for even more easy-to-compute solutions:
For each $s \in V$ with $\deg(s) = 0$ and each $t \in V \setminus \{s\}$ there is no shortest path from $s$ to $t$ and therefore $(s,t,\infty)$ is a solution.
By modifying the algorithm of \autoref{thm:u-APSD-upper} to use these solutions instead of self-distances, we achieve the same delay in the constrained case.

\begin{restatable}{theorem}{thmUApsdNoselfUpper}
	\label{thm:u-APSD-noself-upper}
	Unweighted no-self-APSD enumeration can be solved with delay in $O(\avgdeg)$ and with space complexity in $\Theta(n)$.
\end{restatable}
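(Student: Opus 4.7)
The plan is to adapt the algorithm of \autoref{thm:u-APSD-upper} by replacing the initial batch of $n$ self-distances with an equally cheap batch of $n$ non-self triples, one per vertex. As hinted just before the theorem, a vertex $v$ with $\deg(v)=0$ contributes $n-1$ trivial triples $(v,t,\infty)$ for $t\neq v$, each producible in constant time; any vertex $v$ with $\deg(v)\ge 1$ contributes the triple $(v,u_v,1)$, where $u_v$ is the first out-neighbor read from $v$'s adjacency list in $O(1)$. During a single initial pass over $V$, I would accumulate the degree sum (for later division by $n$ to obtain $\avgdeg$), mark each $v\in V_0:=\{v\in V:\deg(v)=0\}$ for lazy production of its $n-1$ infinity triples, and enqueue $(v,u_v,1)$ for each $v\in V\setminus V_0$ into the solution queue $Q$.

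The accounting then mimics the original proof almost verbatim. The initial pass yields $n$ head-start outputs in $O(n)$ time; the first $n/2$ of them are emitted with constant delay and pay for finishing the iteration and for computing $\avgdeg$, while the remaining $n/2$ are charged $\Theta(\avgdeg+1)$ credit each, building up a $\Theta(n\avgdeg+n)$ balance that bankrolls the subsequent BFS runs. The algorithm then performs a BFS from every $v\in V\setminus V_0$, skipping two already-produced entries per source: the self-distance $(v,v,0)$, which is never output in the no-self setting, and the pre-emitted edge triple $(v,u_v,1)$. Each such BFS costs $O(n+m)$ and contributes $n-2$ fresh triples, so the invariant $B(U)\ge |V\setminus U|(\avgdeg+1)+m+n$ from \autoref{thm:u-APSD-upper} carries over with only constant-factor adjustments. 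The outstanding $|V_0|(n-2)$ infinity triples for degree-$0$ sources are generated on demand in constant time per emission and are woven in whenever $Q$ risks running empty during the BFS phase.

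The main subtlety I expect concerns bookkeeping under the $\Theta(n)$ space bound: I need to remember $u_v$ for every $v\in V\setminus V_0$ so that skipping during BFS from $v$ is a single equality check when the adjacency list yields $u_v$, and I need a counter per $v\in V_0$ tracking the next $t$ whose infinity triple is still pending. Both pieces of information occupy $O(n)$ words in total, so together with the two alternating BFS scratch spaces borrowed from \autoref{thm:u-APSD-upper} the overall space stays in $\Theta(n)$ without requiring lazy initialization. A minor edge case is a self-loop making $u_v=v$; this is easily avoided by choosing $u_v$ as the first out-neighbor distinct from $v$, or by treating $v$ as effectively degree-$0$ if no such neighbor exists.
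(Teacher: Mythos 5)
Your approach is correct, but it takes a genuinely different route from the paper's. The paper seeds the solution queue with \emph{all} distance-1 triples, one for every edge $(s,t)\in E$, plus the $\infty$-triples for isolated sources; its initial pass is therefore $O(n+m)$ and its credit invariant carries an extra term $\sum_{v\in V\setminus U}\deg(v)(\avgdeg+1)$, because each BFS from $s$ contributes only $n-1-\deg(s)$ new triples and overspends by $(1+\deg(s))(\avgdeg+1)$. You instead enqueue just a single triple $(v,u_v,1)$ per non-isolated vertex, which makes the initial pass $O(n)$ and renders the per-BFS shortfall uniform: using $m+n = n(\avgdeg+1)$, a BFS that yields $n-2$ fresh triples overspends by exactly $2(\avgdeg+1)$, so the total shortfall across all $n$ BFS runs is $O(n\avgdeg)$, which the $\Theta(n)$ head-start outputs comfortably cover. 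In short, you trade the paper's larger but more expensive head-start for a smaller but cheaper one, and the invariant loses its degree-sum term.

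One caveat you should not gloss over: the invariant $B(U)\ge |V\setminus U|(\avgdeg+1)+m+n$ from \autoref{thm:u-APSD-upper} does \emph{not} carry over as stated. In your variant $B$ decreases by $2(\avgdeg+1)$ per BFS while the right-hand side decreases by only $(\avgdeg+1)$, so the inequality is not inductively maintained. The fix is to strengthen the invariant to $B(U)\ge 2|V\setminus U|(\avgdeg+1)+m+n$ and charge the later head-start outputs roughly $6(\avgdeg+1)$ extra each so that $B(\emptyset)\ge 3n(\avgdeg+1)$; your phrase ``constant-factor adjustments'' is doing real work here and should be spelled out. Two minor slips: each $v\in V_0$ has $n-1$, not $n-2$, outstanding $\infty$-triples; and you should state the standing assumption that the graph has no self-loops (so that $u_v\neq v$ and $(v,u_v,1)$ is a genuine non-self, distance-1 solution) -- the paper makes the same implicit assumption when it enqueues $(s,t,1)$ for every edge.
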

\begin{proof}
	The algorithm maintains a solution queue $Q$ and a queue $S$ of start vertices for later breadth-first searches.
	In the first step, the algorithm iterates over all vertices and edges, computes $\avgdeg$ and distinguishes two cases for the current vertex $s$:
	\begin{enumerate}
		\item If $\deg(s) = 0$, append for each $t \in V \setminus \{s\}$ the solution $(s,t,\infty)$ to $Q$.
		\item Otherwise, append for each $(s,t) \in E$ the solution $(s,t,1)$ to $Q$ and append $s$ to $S$.
	\end{enumerate}
	Afterwards, the algorithm finds all remaining distances by running BFS on each vertex in $S$ and from each such search emitting all solutions with distance~$>1$.
	The first $\frac{n}{2}$ solutions are emitted with constant delay, all remaining ones with delay in~$\Theta(\avgdeg)$.

	During the first pass over all vertices and edges, the algorithm produces at least $\max(m, n)$ solutions, each within a constant number of steps.
	We again assign a payment of 3 to the first $\frac{n}{2}$ solutions and pay for the computation of $\avgdeg$ with the surplus $n$ units of credit.
	The remaining at least $\max(m, n) - \frac{n}{2} \geq \frac{\max(m, n)}{2}$ initial solutions are assigned an additional payment of $4 \cdot (\avgdeg + 1) + 2 \cdot (\avgdeg + 1) + 4$ units.
	We claim that, after the algorithm has run a BFS for each start vertex in the set $U \subseteq S$, the credit balance is at least
	\[B(U) = \sum_{v \in V \setminus U} \deg(v) \cdot (\avgdeg + 1) + |V \setminus U| \cdot (\avgdeg + 1) + m + n.\]
	For $U = \emptyset$ this is true:
	Recall that, by the handshaking lemma, $2m = \sum_{v \in V} \deg(v)$ holds.
	Assuming the claim holds for any $U \subset V$ we inspect a breadth-first search with cost $m + n$ on any vertex $s \in S \setminus U$.
	As $B(U) > m + n$, the BFS can be performed without the solution queue running empty during the search.
	The BFS computes the distance to all vertices in the graph, producing $n - 1 - \deg(s)$ new solutions which are each charged witch a cost of $\avgdeg + 1$.
	(Additionally, the BFS computes the self-distance and $\deg(s)$ solutions with distance 1 that have already been put into $Q$ during initialization.)
	Thus we get $B(U \cup \{s\}) = B(U) - (m + n) + (n - 1 - \deg(s)) \cdot (\avgdeg + 1)$ as claimed.

	Note that this algorithm produces up to $\Theta(n^2)$ solutions during the first iteration over all vertices and edges which would require $\Theta(n^2)$ memory for the solution queue.
	However, these solutions can be produced on demand whenever the queue runs empty which only requires a single pointer to track the iteration progress.
	Thus, as before in \autoref{thm:u-APSD-upper}, the algorithm has an overall space complexity in $O(n)$.
\end{proof}

For graphs with non-negative weights we can apply a similar idea:
Not all edges form a shortest path, but at least for each vertex the outgoing edge with smallest weight does.
It takes $\Omega(\maxdeg)$ time in the worst case to find such an edge, which is also the minimum time any algorithm must spend before it can possibly emit the first solution.

\begin{restatable}{theorem}{thmWApsdNoselfLower}
	\label{thm:w-APSD-noself-lower}
	No algorithm can solve no-self-APSD enumeration on a weighted graph with both delay and preprocessing in $o(\maxdeg)$, even if the graph is connected and has constant average degree.
\end{restatable}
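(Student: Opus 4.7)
The plan is to adapt the adversary argument of \autoref{thm:SSSD-lower} to the weighted no-self setting. The key observation, already hinted at in the paragraph preceding the theorem, is that for non-negative edge weights the first certifiable triple $(u,v,d)$ with $u \neq v$ essentially has to come from a minimum-weight outgoing edge of some vertex $w$: only for such an edge $(w,x)$ of weight $d$ is it automatic that $\dist(w,x) = d$, and verifying this minimality in turn requires having seen every one of the $\deg(w)$ edges incident to $w$. Our strategy is therefore to exhibit a connected constant-average-degree graph in which locating any vertex $w$ whose adjacency list the algorithm can afford to exhaust already costs $\Omega(\maxdeg)$ queries.

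For the construction we fix a parameter $k \in \Theta(\maxdeg)$ and take as adversary graph $G_\Delta$ a clique on $k$ ``hub'' vertices $s_1, \dots, s_k$ with $\Theta(\maxdeg)$ degree-one leaves attached to each hub; a short count gives $n, m \in \Theta(\maxdeg^2)$, so $\avgdeg \in \Theta(1)$, while the hubs attain degree $\Theta(\maxdeg)$ and the clique keeps the graph connected. The adversary reveals $G_\Delta$ lazily. On each degree-query to a vertex whose type is still open it answers ``hub'' (of degree $\Theta(\maxdeg)$) as long as fewer than $k$ hubs have been declared, switching to answer ``leaf'' (of degree one) only once the hub quota has been exhausted. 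On neighbor-queries of a declared hub $s_i$ it first lists, one after the other, the $k-1$ clique partners of $s_i$ with lazily assigned hub identities and positive weights, and only afterwards begins to expose the leaves attached to $s_i$. All weights are strictly positive integers chosen on demand, so that every unrevealed edge remains free to later become arbitrarily light.

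The main argument then shows that any triple $(u,v,d)$ the algorithm can output without risking a contradiction with some consistent completion of the adversary's answers forces it to have fully traversed the adjacency list of at least one vertex. If that vertex is a hub, this already accounts for $\Theta(\maxdeg)$ neighbor queries. If it is a leaf, the algorithm first has to uncover one, and both available routes are blocked by the adversary: the first $k$ fresh degree-queries all return hubs, and exhausting the neighbor list of any particular hub exposes a leaf only after the $k-1$ clique edges have been enumerated, so either route to a leaf costs $\Omega(k) = \Omega(\maxdeg)$ steps as well. In every case the algorithm therefore spends $\Omega(\maxdeg)$ time before the first certifiable output can be emitted, and this cost must be paid either as preprocessing or as the first delay.

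The main obstacle we foresee is formalising the ``otherwise the output can be falsified'' half of the claim: we must show that whenever the algorithm attempts an output after fewer than the required $\Omega(\maxdeg)$ queries, the adversary still has enough unrevealed edges and unassigned weights in $G_\Delta$ to build a concrete completion consistent with all previous answers in which $\dist(u,v) \neq d$. This calls for careful bookkeeping of which vertices are already committed as hubs or leaves, which pairs are already known to be adjacent with fixed weight, and how many leaf slots each hub still has, together with the observation that as long as some hub incident to $u$ or $v$ still has an unrevealed edge, its weight can be chosen small enough to induce a strictly shorter $u$--$v$ path through that hub and thereby break the claim. The two-case query count sketched above then yields the $\Omega(\maxdeg)$ lower bound.
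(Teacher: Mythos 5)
Your construction differs from the paper's (a clique of hubs with pendant leaves, rather than the clique-with-one-edge-replaced-by-a-long-path graph reused from Theorem~\ref{thm:SSSD-lower}), but either graph has the right degree profile, and the adversarial framework is the same. The gap you already sense, however, is genuine and it is not a mere formalisation chore: your falsification argument, as written, covers only half the cases. You observe that ``as long as some hub incident to $u$ or $v$ still has an unrevealed edge, its weight can be chosen small enough to induce a strictly shorter $u$--$v$ path,'' which refutes an output $(u,v,d)$ whose claimed $d$ is \emph{too large}. But nothing in your sketch refutes an algorithm that speculatively outputs a \emph{small} $d$ (say $d=1$ or $d=2$) without ever having seen evidence for it; for such a claim no ``shorter'' path can be manufactured, so the adversary must instead argue that every consistent completion makes $\dist(u,v)$ \emph{larger} than $d$. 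This requires the adversary to commit to a concrete weight scheme on the revealed edges, which your proposal leaves unspecified (``strictly positive integers chosen on demand'').

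The paper resolves exactly this by fixing all revealed weights to the constant $3$ and then running a clean dichotomy on the reported $d$: if $d\geq 3$, some common neighbour of $u$ and $v$ still has two unqueried incident edges, and setting those to $1$ produces a path of weight $2$; if $d\leq 2$, keeping every remaining weight at $3$ forces $\dist(u,v)\geq 3$. Both branches are needed. Relatedly, your guiding heuristic that the first certifiable triple ``essentially has to come from a minimum-weight outgoing edge'' is a plausible intuition but is not something you can lean on directly -- an adversary argument must refute \emph{every} premature output, not only the ones the algorithm could in principle certify -- and the fixed-weight-then-case-split device is precisely what lets the paper avoid characterising certifiable outputs at all. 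If you adopt a fixed revealed weight (e.g.\ $3$) in your hub-and-leaf graph and add the ``$d$ too small'' branch, your argument should go through; as it stands it is incomplete.
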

\begin{proof}
	We extend the adversarial setup from \autoref{thm:SSSD-lower} with the graph structure shown in \autoref{fig:SSSD-lower} with the following additions:
	\begin{itemize}
		\item The first $k$ vertices the algorithm receives any information on (through neighborhood or degree queries) are set to be from the clique $C$.
		\item With each query for a neighbor $v$ of some vertex $u$, the algorithm also receives the edge weight $w(u,v)$.
		\item The first $k-2$ edge weights reported by the adversary are set to $3$.
	\end{itemize}
	Assume, an algorithm reports $(s, t, d)$ as solution after at most $k-2$ queries to the adversary.
	Clearly, the algorithm cannot emit a shortest distance solution containing some unseen vertex, thus $s, t \in C$ must hold.
	If $d \geq 3$, then the adversary sets all remaining edge weights to 1.
	As there must be a common neighbor $v$ of both $s$ and $t$ of which the algorithm has not queried a single edge, the correct shortest distance would have been $(s,t,2)$ with the shortest path $(s,v,t)$.
	For $d \leq 2$ all remaining edge weights are also set to 3 and $\dist(s,t) \geq 3$ holds, again making the output wrong.
	Therefore every algorithm needs to ask at least $k$ queries before it can emit the first shortest distance.
	As $\maxdeg = k-1$, an algorithm needs to have delay or preprocessing time in $\Omega(\maxdeg)$.
\end{proof}

Whilst not fully closing the gap to this $\Omega(\maxdeg)$ lower bound on delay plus preprocessing, we can give an algorithm that uses $O(n)$ preprocessing time and then achieves the same delay as previous algorithms without output constraints:
During preprocessing, the algorithm computes $\avgdeg$ and, using BucketSort, sorts the vertices by increasing degree.
In this order, the algorithm can afterwards find for each vertex the adjacent edge with minimum weight and put the corresponding solution into the solution queue.
This way, the algorithm does not run into vertices with high degree early on, but can build some head start with prepared solutions first.
\begin{restatable}{theorem}{thmWApsdNoselfUpper}
	\label{thm:w-APSD-noself-upper}
	No-self-APSD enumeration on graphs with edge weights in $\{0, \ldots, n^c\}$, for some constant $c$, can be solved with delay in $O(\avgdeg + \log(n))$, preprocessing in $O(n)$ and with space complexity in $\Theta(n)$.
\end{restatable}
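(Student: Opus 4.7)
My plan is to adapt the unweighted no-self algorithm of \autoref{thm:u-APSD-noself-upper}, replacing BFS by Dijkstra's algorithm with a Strict Fibonacci Heap (as in \autoref{thm:w-SSSD-upper}), and compensating for the weaker supply of \emph{free} solutions. In the unweighted no-self setting every edge $(u,v) \in E$ yielded a free triple $(u,v,1)$, but with non-negative integer weights I can only guarantee one free triple per non-isolated vertex: if $(v, v^*)$ is an outgoing edge of minimum weight, then $(v, v^*, w(v, v^*))$ is a shortest distance, because non-negative weights imply every path from $v$ to $v^*$ has total weight at least $w(v, v^*)$. To prevent the algorithm from getting stuck scanning a high-degree vertex before any free triple has been produced, the $O(n)$ preprocessing phase computes $\avgdeg$ by summing the constant-time-accessible degrees and performs a BucketSort by degree (which runs in $O(n)$ since degrees lie in $\{0, \ldots, n-1\}$), yielding an order $v_1, \ldots, v_n$ with $\deg(v_1) \leq \cdots \leq \deg(v_n)$.

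The main loop then iterates $i = 1, \ldots, n$. For an isolated $v_i$, the algorithm enqueues on demand (as in \autoref{thm:u-APSD-noself-upper}) the $n-1$ triples $(v_i, t, \infty)$ for $t \neq v_i$. For a non-isolated $v_i$, it first scans the adjacency list of $v_i$ in $\deg(v_i)$ steps, enqueues the min-weight-edge triple $(v_i, v_i^*, w(v_i, v_i^*))$, and then executes Dijkstra from $v_i$ as in \autoref{thm:w-SSSD-upper}, enqueuing each produced distance triple except the self-distance and the already-enqueued min-edge triple. Solutions are emitted from the queue every $D = c(\avgdeg + \log(n))$ steps for a suitable constant $c$. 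Two alternating copies of Dijkstra's data structures, prepared during the preceding Dijkstra run or during preprocessing for the first run, keep the overall space in $\Theta(n)$, exactly as in \autoref{thm:w-APSD-upper}.

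For the delay analysis, I mirror the accounting argument of \autoref{thm:u-APSD-noself-upper} combined with the $\log(n)$ overhead treated in \autoref{thm:w-APSD-upper}. With initial balance $\Theta(n)$ from preprocessing and $D$ credit per emission, each $v_i$ contributes exactly $n-1$ new solutions, worth $(n-1)D = \Omega(m + n\log(n))$ credit, while costing at most $\deg(v_i) + O(m + n\log(n))$ time; the Dijkstra part is absorbed exactly as in \autoref{thm:w-APSD-upper}, leaving a slack of $\Omega(m + n\log(n))$ credit per vertex to cover the min-edge scan of cost at most $n-1$. To make this precise, I carry through a credit invariant analogous to $B(U)$ in \autoref{thm:u-APSD-noself-upper}, replacing the factor $(\avgdeg + 1)$ by $(\avgdeg + \log(n) + 1)$ wherever Dijkstra's priority-queue operations enter.

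The main obstacle is maintaining a non-negative balance at intermediate states: the min-edge scan of a high-degree $v_i$ produces only one solution after $\deg(v_i)$ steps, and a single Dijkstra iteration can cost $O(\log(n) + \maxdeg)$ before fixing one distance. The sorted processing order is crucial here, as $\sum_{j \leq i} \deg(v_j) \leq i \cdot \avgdeg$ because the average of the $i$ smallest degrees cannot exceed $\avgdeg$; by the time the algorithm reaches a high-degree $v_i$, the preceding vertices have already deposited $\Omega((i-1)(m + n\log(n)))$ credit through their $(i-1)(n-1)$ enqueued solutions, comfortably absorbing the worst-case dip of at most $n-1$ from the current scan. Intra-Dijkstra dips are handled exactly as in \autoref{thm:w-APSD-upper}, since the amortized cost per fixed distance within a single Dijkstra run is $O(\avgdeg + \log(n))$.
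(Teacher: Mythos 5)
Your overall ingredients are the right ones and match the paper: $O(n)$ preprocessing with BucketSort by degree, one cheap triple per vertex (a minimum-weight outgoing edge for non-isolated vertices, $(v,t,\infty)$ for isolated ones), Dijkstra with a Strict Fibonacci Heap for the remaining distances, alternating lazy-initialized data structures for $\Theta(n)$ space. However, there is a genuine gap in how you schedule the work.

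You \emph{interleave} the min-edge scan of $v_i$ with the Dijkstra run from $v_i$ inside the same loop iteration. The paper instead runs two separate phases: first scan \emph{all} $n$ vertices and deposit all $\geq n$ cheap triples into $Q$, and only afterwards start the $n$ Dijkstra instances. This ordering is not cosmetic. The paper's accounting relies on the invariant that the balance before every Dijkstra run is at least $m + n\log n$, so that the full run, which costs $\Theta(m + n\log n)$, can be charged up-front without the solution queue ever emptying mid-run. After phase~1, the $\geq n$ cheap triples give $\Omega\bigl(n(\avgdeg + \log n)\bigr) = \Omega(m + n\log n)$ credit, which is exactly what is needed. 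In your interleaved version, before Dijkstra is started from $v_1$ you have only the $O(n)$ preprocessing credit, one unit of delay, and one further emitted triple, i.e.\ a balance of $O(n + \avgdeg + \log n)$. If the graph has no isolated vertices and $m = \omega(n)$ (say every vertex has degree $\Theta(\sqrt{n})$), this is far short of the $\Theta(m + n\log n)$ worst-case dip of the first Dijkstra run, and the queue would run empty. Your appeal to ``amortized cost per fixed distance is $O(\avgdeg + \log n)$'' does not save you here: amortization over a run is only valid if the credit at the \emph{start} of the run can absorb the worst intermediate dip, which is what the two-phase structure guarantees and your interleaving does not. The sorted processing order you invoke (``$\sum_{j\leq i}\deg(v_j)\leq i\avgdeg$'') controls the cost of the \emph{scans}, but does not help with the first Dijkstra run, which has no preceding Dijkstra runs to subsidize it. The fix is simply to separate the phases as the paper does.
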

\begin{proof}
	During preprocessing, the algorithm sorts the vertices by increasing node degree and computes $\avgdeg$.
	Again, two queues are maintained, $Q$ for solutions and $S$ for start vertices for Dijkstra's algorithm.
	In the enumeration phase the algorithm iterates over the sorted vertices and distinguishes two cases for the current vertex $s$:
	\begin{enumerate}
		\item If $\deg(s) = 0$, append for each $t \in V \setminus \{s\}$ the solution $(s,t,\infty)$ to $Q$.
		\item Else, find $t = {\arg\min}_{v \in V}\{\, w(s,v) \mid (s,v) \in E \,\}$, append $(s,t,w(s,t))$ to $Q$ and $s$ to $S$.
	\end{enumerate}
	The remaining distances are found by running Dijkstra's algorithm on each vertex in $S$.
	In each such run, the respective edge from case~2 can be found again and the corresponding distance solution can be skipped.
	All solutions are emitted with delay in $\Theta(\avgdeg)$.

	By using Bucket Sort, the preprocessing can be done in $O(n)$ time with $O(n)$ memory.
	As the algorithm iterates over the vertices in order of their degree, the initial solutions can be produced with delay in~$O(\avgdeg)$:
	In the first case, the algorithm produces $n-1$ solutions with constant delay, in the second case it takes $O(\deg(s))$ time to compute one solution for vertex $s$.
	Let $v_1, v_2, \ldots, v_n$ be the vertices sorted by increasing (out-)degree.
	Then, for each $1 \leq j \leq n$, the inequality $\sum_{i=1}^{j} \deg(v_i) \leq j \cdot \avgdeg$ holds.
	Thus, by charging each solution with $\avgdeg$ units of credit, the computation effort for the expensive solutions can always be paid for; incorporating the $\infty$-solutions only increases the head start.

	We charge each of the at least $n$ initial solutions additional $3(\avgdeg + \log(n))$ units.
	In comparison to \autoref{thm:w-APSD-upper} we then have $B(U) = 2|V \setminus U| \cdot (\avgdeg + \log(n)) + m + n\log(n)$ and $n-2$ new solutions per run of Dijkstra's algorithm, each charged with $\avgdeg + \log(n)$ units.
	Again, $B(U)$ is non-negative for all $U \subseteq V$ and the algorithm enumerates the solutions with delay in $O(\avgdeg + \log(n))$.

	By using the same trick as in \autoref{thm:u-APSD-noself-upper}, the initial solutions can be produced on demand, without memory-overhead.
	The space complexity of the remaining algorithm is $\Theta(n)$ as before in \autoref{thm:w-APSD-upper}.
\end{proof}

\subsection{Reachable-APSD Enumeration}

Both BFS and Dijkstra's algorithm primarily deal with the vertices reachable from the start vertex.
Thus, for the SSSD algorithms based on BFS and Dijkstra's algorithm, the distance $\infty$ to unreachable and thus unvisited vertices is produced merely as a byproduct.
However, restricting APSD enumeration to distances $< \infty$ (i.\,e., \emph{reachable-APSD}, as defined in \autoref{sec:contribution}) has a big impact on the delay.
In fact, solving this restriction with delay in $O(\avgdeg)$ would imply a surprising breakthrough in Boolean Matrix Multiplication (BMM), as we can reduce BMM to reachable-APSD enumeration:
We augment the standard graph representation of the $d \times d$ BMM instance with additional $O(d^2)$ vertices that reduce the average degree to a constant.
Enumerating the $O(d^2)$ reachable distances with delay in $O(\avgdeg)$ then solves BMM in $O(d^2)$ total time (since the average degree is constant).
\begin{restatable}{theorem}{thmRApsdLower}
	\label{thm:r-APSD-lower}
	No algorithm can solve reachable-APSD enumeration with delay in $O(\avgdeg)$ and preprocessing in $O(n + m)$ unless $d \times d$ BMM can be solved in $O(d^2)$ total time.
\end{restatable}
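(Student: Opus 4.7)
The plan is to reduce $d \times d$ Boolean matrix multiplication (BMM) to reachable-APSD enumeration on a graph that has constant average degree, so that a hypothetical $O(\avgdeg)$-delay algorithm becomes an $O(1)$-delay algorithm on this instance. Given Boolean matrices $A, B \in \{0,1\}^{d \times d}$, I would build the standard tripartite reachability gadget on vertex set $U \cup V \cup W$ with $U = \{u_1,\dots,u_d\}$, $V = \{v_1,\dots,v_d\}$, $W = \{w_1,\dots,w_d\}$, placing a directed edge $(u_i, v_k)$ whenever $A[i][k] = 1$ and a directed edge $(v_k, w_j)$ whenever $B[k][j] = 1$. Then the product $C = AB$ satisfies $C[i][j] = 1$ if and only if $w_j$ is reachable from $u_i$, and in that case $\dist(u_i, w_j) = 2$ (the graph being tripartite leaves no other option).

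To force the average degree down to a constant, I would attach $d^2$ additional isolated padding vertices, which can be added to the adjacency-list representation in $O(d^2)$ time and do not affect reachability between the original vertices. The resulting graph has $n \in \Theta(d^2)$ vertices and $m \in O(d^2)$ edges, hence $\avgdeg \in O(1)$. The set of reachable distance triples decomposes into: (i) $n \in O(d^2)$ self-distances, (ii) $O(d^2)$ distance-$1$ triples corresponding to edges, and (iii) up to $d^2$ distance-$2$ triples from $U$ to $W$, which are exactly the $1$-entries of $C$; in total $O(d^2)$ triples.

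Assuming an algorithm $\mathcal{A}$ for reachable-APSD enumeration with preprocessing in $O(n+m)$ and delay in $O(\avgdeg)$, I would run $\mathcal{A}$ on this padded graph. Preprocessing costs $O(d^2)$, and since $\avgdeg \in O(1)$, each delay is $O(1)$, so enumerating all $O(d^2)$ reachable triples takes $O(d^2)$ time. I would initialize an all-zero $d \times d$ output array $C$ (another $O(d^2)$) and, whenever a triple of the form $(u_i, w_j, 2)$ is produced, set $C[i][j] = 1$; triples of other shapes are simply discarded. This recovers $AB$ correctly in $O(d^2)$ total time, which is the desired contradiction with the hypothesis that $d \times d$ BMM requires $\omega(d^2)$ time.

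The only step that requires care is the padding argument: one must verify that the isolated vertices can be incorporated into the input representation assumed in \autoref{sec:model} (with constant-time access to degrees and adjacency lists) without hiding nontrivial work in the construction itself. Since isolated vertices trivially fit this model and the gadget is built in $O(d^2)$ time, no obstacle arises there. The conceptual crux is simply the observation that lowering $\avgdeg$ by padding translates an algorithm whose delay depends on $\avgdeg$ into a constant-delay algorithm on the relevant instance.
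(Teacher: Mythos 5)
Your proof is correct and takes essentially the same approach as the paper: the standard tripartite BMM reachability gadget padded with $\Theta(d^2)$ isolated vertices to force constant average degree, so that an $O(\avgdeg)$-delay algorithm with $O(n+m)$ preprocessing recovers $AB$ in $O(d^2)$ time. The only cosmetic difference is that the paper pads with $2d^2$ isolated vertices rather than $d^2$, which is immaterial since either choice yields constant average degree.
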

\begin{proof}
	We reduce Boolean Matrix Multiplication to reachable-APSD enumeration.
	Given two $d \times d$ Boolean matrices $A$ and $B$ we construct the standard graph representation of the BMM instance:
	The graph consists of three vertex partitions $I, J, K$, each with vertices~$\{1, \ldots, d\}$.
	Connect two vertices $i \in I, j \in J$ iff $A[i,j] = 1$ and two vertices $j \in J, k \in K$ iff $B[j,k] = 1$.
	Thus we have, for all $i \in I, k \in K$, the equivalence $(A \cdot B)[i,k] = 1 \Leftrightarrow \dist(i,k) = 2$.

	We augment this graph with $2d^2$ isolated vertices.
	Note that the resulting graph has $n = 2d^2 + 3d$ vertices, $m \leq 2d^2$~edges and therefore constant average degree.
	Building the graph can be done in $O(d^2)$ time.
	The restricted APSD enumeration produces $O(d^2)$ solutions from which the resulting matrix $A \cdot B$ can be constructed in $O(d^2)$ time.
	Thus, an enumeration algorithm for reachable-APSD with delay in $O(\avgdeg)$ and preprocessing in $O(n + m)$ yields an $O(d^2)$ BMM algorithm.
\end{proof}

Recall that by \autoref{cor:c-sssd}, reachable-(no-self)-SSSD can be solved with delay in $O(\maxdeg)$ on unweighted graphs and with delay in $O(\maxdeg + \log(n))$ on graphs with non-negative edge weights.
Repeatedly applying these SSSD enumeration algorithms solves the corresponding APSD-enumeration with the same delay bounds.
With the no-self restriction it may take up to $\Theta(n)$ steps to find the first edge.
\begin{corollary}
	\label{cor:r-apsd}
	Reachable-APSD enumeration can be solved with delay in $O(\maxdeg)$ for unweighted graphs and $O(\maxdeg + \log(n))$ for graphs with edge weights in $\{0, \ldots, n^c\}$, for some constant $c$.
	Both results use $\Theta(n)$ lazy-initialized memory and have an overall space complexity of $\Theta(n)$.
	Without self-distances, $O(n)$ additional preprocessing is needed, which in turn eliminates the need for lazy-initialized memory.
\end{corollary}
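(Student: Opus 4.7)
The plan is to apply the reachable-(no-self-)SSSD algorithm of Corollary~\ref{cor:c-sssd} once from every source vertex $s \in V$. Each such SSSD run has delay $O(\maxdeg)$ (unweighted) or $O(\maxdeg + \log(n))$ (weighted), so the only real challenge is the handoff from one SSSD run to the next without violating this delay bound across the full APSD enumeration.

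For the variant that keeps self-distances, every source produces at least one output (namely $(s,s,0)$), so between two consecutive SSSD runs the algorithm always has at least one unit of delay budget available to free state and prepare the next invocation. Following the alternation idea used in the proof of Theorem~\ref{thm:u-APSD-upper}, the algorithm maintains two independent copies of the SSSD data structures (distance array, visited array, priority queue or BFS queue) and uses them round-robin: while one copy drives the current run, the other is being lazily initialized for the next source. Each copy still needs $\Theta(n)$ lazy-initialized memory because the first output of each new SSSD invocation must appear within $O(\maxdeg + \log(n))$ steps and cannot wait for an $O(n)$ eager reset. Since we keep only two copies plus a constant-size solution queue, the overall space stays in $\Theta(n)$.

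For the no-self variant two complications appear. First, each SSSD run is shortened by exactly one emitted triple (the dropped self-distance), which does not affect the amortization. Second, some sources produce no output at all, namely those vertices whose out-degree is zero (or whose outgoing edges only loop back to themselves); locating the next productive source in the worst case could require $\Theta(n)$ traversal steps that cannot be charged to any output. Both difficulties are addressed by the $O(n)$ preprocessing phase: a single linear scan over the vertices collects the productive sources into a list $L$ and, using the same budget, eagerly initializes the $\Theta(n)$ SSSD arrays, removing the need for lazy-initialized memory. During enumeration the algorithm iterates through $L$ only, and after each SSSD run it resets only the cells actually modified using a standard dirty-list; an SSSD run that produces $r-1 \geq 1$ outputs touches $O(r)$ cells, so the reset cost amortizes to $O(1)$ per emitted triple and does not break the delay bound.

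The main obstacle is the interaction of the two no-self complications: the algorithm must both skip unproductive sources for free and avoid lazy-initialized memory, even though each individual SSSD invocation still needs fully initialized $\Theta(n)$ arrays before its first output. The $O(n)$ preprocessing phase resolves both issues simultaneously by precomputing $L$ and eagerly initializing the arrays once, after which the dirty-list reset keeps the arrays in a clean state between subsequent runs without further global passes.
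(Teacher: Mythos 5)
Your overall plan is the paper's approach: chain the reachable-(no-self-)SSSD algorithms of Corollary~\ref{cor:c-sssd} across all sources, and use $O(n)$ preprocessing in the no-self case because the first edge (and hence the first output) may take $\Theta(n)$ steps to locate. Your dirty-list argument for the no-self reset is a reasonable way to justify the claim that preprocessing eliminates the need for lazy-initialized memory, and the amortization (a productive source with $r-1 \geq 1$ outputs touches $O(r)$ cells) is sound.

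Two smaller remarks. First, in the with-self case the double-buffering borrowed from Theorem~\ref{thm:u-APSD-upper} is superfluous once you commit to lazy-initialized memory: resetting a lazy-initialized region is $O(1)$ (reset the indirection counter), so a single $\Theta(n)$ region can be recycled between consecutive SSSD invocations. The reason Theorem~\ref{thm:u-APSD-upper} alternates two eagerly-initialized copies is that each of its BFS runs is guaranteed $\Omega(n)$ work (the final pass over unreachable vertices), which gives time to eagerly reset the other copy; for reachable-SSSD that guarantee fails (a source may reach only a handful of vertices), which is exactly why lazy initialization is listed for this corollary and eager double-buffering would not suffice. Second, the solution queue $Q$ is not constant-size --- as in Theorem~\ref{thm:u-SSSD-upper} it can grow to $\Theta(n)$ entries when early vertices have high degree and later ones low; your claimed $\Theta(n)$ overall space is still correct, but the justification should account for the queue.
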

Both algorithms have cubic total time for graphs with high maximum degree.
The fastest so-called \textit{combinatorial} algorithm to date multiplies $n \times n$ matrices in $O(n^3 / \log(n)^3)$ time (ignoring $\log\log$ factors)~\cite{yuImprovedCombinatorialAlgorithm2018}.
As BMM can be solved with FMM in $O(n^\omega)$ it is open whether reachable-APSD enumeration admits for a non-combinatorial algorithm with delay in $o(\maxdeg)$.

\section{Sorted-APSD Enumeration}
\label{sec:s-APSD}
Although the presented algorithms for SSSD enumeration happen to already produce output sorted by distance, running these algorithms in sequence as presented in \autoref{sec:APSD} does not preserve this output property and thus does not solve sorted-APSD enumeration.

As first observation, we note that a delay in $O(\avgdeg)$ (as we have for unweighted unconstrained APSD enumeration) would be surprising, as the conditional lower bound of \autoref{thm:r-APSD-lower} for reachable-APSD enumeration transfers to the sorted-APSD enumeration problem:
By enumerating the solutions to APSD sorted by increasing distance and stopping the enumeration as soon as the first infinite distance is emitted, one solves reachable-APSD with a delay at least as good as sorted-APSD.

\begin{corollary}
	\label{cor:s-APSD-lower}
	No algorithm can solve sorted-APSD enumeration with delay in $O(\avgdeg)$ and preprocessing in $O(n + m)$ unless $d \times d$ BMM can be solved in $O(d^2)$ total time.
\end{corollary}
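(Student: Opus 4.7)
The plan is to reduce reachable-APSD enumeration to sorted-APSD enumeration, thereby transferring the conditional lower bound of \autoref{thm:r-APSD-lower} to the sorted setting. The key observation, already hinted at in the paragraph preceding the corollary, is that an algorithm producing all distance triples sorted by increasing distance must emit every finite-distance triple before any triple of the form $(u,v,\infty)$; hence such an algorithm ``solves'' reachable-APSD as soon as it is run with a termination criterion.

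Concretely, I would proceed as follows. Suppose for contradiction that some algorithm $A$ solves sorted-APSD enumeration with delay in $O(\avgdeg)$ and preprocessing in $O(n+m)$. Construct an algorithm $A'$ for reachable-APSD that runs $A$ with the same preprocessing, and after preprocessing behaves as follows: whenever $A$ would emit a triple $(u,v,\dist(u,v))$ with $\dist(u,v)<\infty$, $A'$ emits the same triple; the first time $A$ is about to emit a triple with distance $\infty$ (or signals end-of-enumeration), $A'$ halts. Since sorted-APSD enumerates in increasing distance order, every reachable triple is output by $A$ before any unreachable one, so $A'$ outputs exactly the set of reachable distance triples without repetition. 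The delay between consecutive outputs of $A'$ is at most the delay of $A$, and the time to detect termination is the delay of $A$ spent on producing the first $\infty$-triple (or on detecting end-of-enumeration), which is also in $O(\avgdeg)$. Preprocessing remains $O(n+m)$.

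Applying this simulation to the BMM-encoding graph constructed in the proof of \autoref{thm:r-APSD-lower} then yields an $O(d^2)$-time algorithm for $d \times d$ Boolean matrix multiplication, contradicting the hypothesis in the statement. I do not anticipate a real obstacle here; the only subtlety worth stating explicitly is that the delay measure includes the time to declare end-of-enumeration, which is precisely what guarantees that $A'$ can stop at the first $\infty$-triple within the sorted-APSD delay bound rather than having to process all remaining unreachable triples.
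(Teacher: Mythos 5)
Your proof is correct and follows exactly the paper's own approach: reduce reachable-APSD to sorted-APSD by truncating the sorted enumeration at the first $\infty$-triple, and then invoke \autoref{thm:r-APSD-lower}. The only difference is that you spell out the simulation and termination-detection bookkeeping explicitly, whereas the paper compresses this into a single sentence before the corollary.
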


One can solve sorted-APSD enumeration by running the respective SSSD algorithms in parallel.
In the unweighted case, the individual breadth-first searches are run in a cyclic fashion, each search being interrupted as soon as the reported next lowest hop distance increases.

\begin{restatable}{theorem}{thmUSApsdUpper}
	\label{thm:u-s-APSD-upper}
	Unweighted sorted-APSD enumeration can be solved with delay in $O(\maxdeg)$, with $\Theta(n^2)$ lazy-initialized memory and space complexity in $\Theta(n^2)$.
\end{restatable}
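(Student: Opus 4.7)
My plan is to run $n$ breadth-first searches in parallel, one per source vertex, coordinated by a bucket queue $B$ indexed by hop-distance. For every source $s$ I maintain a lazy-initialized row $d_s[\cdot]$ of length $n$, for $\Theta(n^2)$ lazy-initialized memory in total. Bucket $B[d]$ holds pending pairs $(s,v)$ with $\dist(s,v)=d$ whose adjacency list has not yet been scanned. A monotone pointer $d^{\star}$ tracks the smallest non-empty finite bucket; each emission step extracts some pair $(s,v) \in B[d^{\star}]$, outputs $(s,v,d^{\star})$, and scans $v$'s outgoing edges, pushing every undiscovered neighbor $u$ into $B[d^{\star}+1]$ after setting $d_s[u]:=d^{\star}+1$. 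Because every individual BFS processes its vertices in non-decreasing distance order, always pulling from the globally smallest non-empty bucket yields sorted output.

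The delay is analyzed by the accounting method of \autoref{thm:u-SSSD-upper} and \autoref{thm:u-APSD-upper}, charging each emission $c(\maxdeg+1)$ credit for a sufficiently large constant $c$. The first $n$ emissions are self-distances $(s,s,0)$ costing $O(1)$ apiece, banking $\Theta(n\maxdeg)$ units that pay for (i) the $O(n)$ initial setup of bucket and row pointers and (ii) the at most $n$ cumulative advancements of $d^{\star}$ throughout the whole run. Every subsequent finite emission $(s,v,d)$ spends $O(\deg(v))+O(1) = O(\maxdeg)$ on scanning $v$'s adjacency and on constant-time bucket operations, leaving a non-negative surplus. The residual $(s,v,\infty)$ triples are produced by a row-major sweep of the distance matrix that is interleaved with emissions: a source $s$ is declared \emph{terminated} the moment $B$ contains no pair with first component $s$, and after every emission the algorithm advances the scan pointer by one cell within the oldest unscanned terminated row, appending $(s,v,\infty)$ to an $\infty$-output queue whenever the cell is unset. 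Since there are exactly $n^2$ emissions in total and at most $n^2$ cells must ever be inspected, this contributes only $O(1)$ charge per output, so the overall per-emission cost remains $O(\maxdeg)$.

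The main technical obstacle I anticipate is certifying that this interleaved sweep keeps the $\infty$-output queue populated at every moment the algorithm is required to emit an $\infty$-triple; if the queue were ever drained before the sweep catches up, the algorithm would incur a forbidden $\Theta(n^2)$ burst at the finite/$\infty$ transition. This is handled by starting a row's sweep the instant its source terminates rather than waiting for the global end of the finite phase, so that sweeping runs concurrently with the ongoing BFS work of other sources and the scan progress stays in lock-step with the emission demand; combined with the bootstrap reserve banked during the self-distance phase, the $\infty$-queue is never empty when consulted. The space bound is immediate: the distance matrix dominates at $\Theta(n^2)$ lazy-initialized cells, while $B$, the termination FIFO, and the $\infty$-output queue together store at most $O(n^2)$ items, since each pair $(s,v)$ resides in at most one of them at any time.
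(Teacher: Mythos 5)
Your bucket-queue coordination of the $n$ parallel BFS runs is a valid (and arguably cleaner) alternative to the paper's own device, which cycles a FIFO of per-source BFS instances and lets each instance hold back its first solution at the next higher hop level. The two differ most sharply in how the $(s,v,\infty)$ triples are produced, and that is exactly where your proof has a genuine gap, which you yourself flag as the ``main technical obstacle'' but then resolve only by assertion. The paper sidesteps the issue entirely by letting \emph{each} per-source BFS emit its own $\infty$-triples with its own $O(\maxdeg)$ delay (each BFS banks $r$ credit during its search for skipping the $r$ visited vertices in the final pass, exactly as in the single-source Theorem~\ref{thm:u-SSSD-upper}), so the master loop never needs a global $\infty$-queue at all.

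Your global row-major sweep at one cell per emission does not work as stated. Track the quantity $Q_k =$ size of the $\infty$-queue at the start of step $k$ and $F_k =$ finite cells not yet scanned. Since finite triples are emitted before the $\infty$-queue is ever consulted, and assuming the sweep never idles, one gets $Q_k = F_k$ for every step $k$ after the $f$ finite triples are exhausted. This means the queue hits $0$ exactly when the sweep has scanned the last finite cell, and with your stated order (``\emph{after} every emission the algorithm advances the scan pointer'') the next step must first pop an empty queue before the compensating scan can refill it. A concrete instance where this bites: a directed path $1\to 2\to\cdots\to n/2$ plus $n/2$ isolated vertices has $\maxdeg = 1$; vertex~1 terminates last and its row (swept last) contains $n/2$ finite cells at positions $(1,1),\dots,(1,n/2)$ followed by $n/2$ $\infty$-cells, so the queue is drained to~$0$ after the last finite cell and the algorithm cannot produce the next $\infty$-output. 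Reversing the order to ``scan, then emit'' would repair this particular failure, but you would then still need to rule out the sweep \emph{lagging}: the pointer advances only inside terminated rows, and before the first source terminates (which can take up to a full distance level of the bucket queue) the sweep idles, so $a_k < k$ and the equality $Q_k = F_k$ degrades to an inequality that your proposal never controls. None of this is addressed by the remark that sweeping ``starts the instant its source terminates'' or by the amortized $O(1)$-per-output count, which only bounds total work, not the worst-case delay at the finite/$\infty$ transition. To make your architecture go through you would need a per-row reserve argument mirroring the paper's per-BFS accounting (e.g., prove that when the sweep enters row $s$ the queue holds at least $r(s)$ entries), or abandon the global sweep in favour of letting each source emit its own $\infty$-triples as the paper does.

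A smaller inaccuracy: in the bucket-queue algorithm the extraction of $(s,s)$ from $B[0]$ already triggers the scan of $s$'s adjacency list, so emitting $(s,s,0)$ costs $\Theta(\deg(s))$, not $O(1)$; this does not break the accounting (the total is $O(m)\le O(n\maxdeg)$, comfortably covered by the charge of $c(\maxdeg{+}1)$ per self-distance), but the claim that these $n$ outputs ``cost $O(1)$ apiece'' should be corrected or the self-distance emission should be explicitly decoupled from the adjacency scan, as the paper does.
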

\begin{proof}
	Similar to \autoref{thm:u-APSD-upper}, the algorithm maintains a solution queue $Q$ and begins by computing the trivial $n$ solutions with distance $0$ along with the maximum degree $\maxdeg$; emitting the first $\frac{n}{2}$ solutions in constant delay and the second half with delay in $O(\maxdeg)$.
	Afterwards it initializes on every vertex $v$ a BFS $B_v$ that maintains its private solution queue $Q_v$; each BFS $B_v$ is initially executed up to the point that it would emit the solution $(v,v,0)$ from this queue.
	These solutions are discarded instead, as they have already been added to $Q$.

	The algorithm keeps track of the next expected hop distance $d$, which is initialized to $d = 1$.
	All BFS instances are put in a queue $B$, which is processed as follows:
	As long as $B$ is not empty, select the head entry $B_s$.
	If $Q_s$ has an entry, move it to the end of $Q$ and set $d$ to the distance of this solution.
	Then, advance $B_s$ to produce its next solution $(s,t,\dist(s,t))$ and put it in its private solution queue $Q_s$.
	If the distance is $\dist(s,t) = d$, then do nothing.
	Should $\dist(s,t)$ be larger, then move $B_s$ to the end of $B$.
	If $B_s$ is finished and thus has not produced a new solution, remove it from $B$.
	Afterwards, the loop over $B$ continues.

	Note that each BFS $B_s$ is effectively executed until it emits its first solution of the next higher hop distance; this respective solution is held back until all other searches produced all remaining solutions with the current hop distance and $B_s$ arrives again at the head of $B$.
	Thus, the algorithm enumerates all solutions sorted by increasing distance.

	We use accounting to prove that this procedure indeed emits solutions from $Q$ with delay in $O(\maxdeg)$.
	The initial $n$ solutions are produced with constant delay, the first half being also emitted with constant delay. These solutions are used to pay for the degree computation.
	The second set of $\frac{n}{2}$ solutions is also produced with constant delay but is additionally charged $4 \cdot \maxdeg$ units, resulting in $2n\maxdeg$ credit.
	From that, $n\maxdeg$ credit is used to pay for the initialization of $n$ BFS instances, the queue $B$ and the distance tracker $d$.

	Recall from \autoref{thm:u-SSSD-upper} and \autoref{cor:c-sssd} that unweighted SSSD can be solved with delay in $O(\maxdeg)$ with an algorithm that produces the solutions sorted by increasing distance.
	Thus, each loop iteration costs $\maxdeg$ credit.
	Each respective first time a BFS $B_s$ is the head entry of $B$, its private solution queue $Q_s$ will be empty.
	These $n$ iterations are paid for with the remaining $n\maxdeg$ credit from the initial solutions.
	For all other iterations, the solution being moved to $Q$ is charged $\maxdeg$ which pays for all computation in this iteration.

	As each solution is assigned a cost in $O(\maxdeg)$, the algorithm solves unweighted sorted-APSD enumeration with delay in $O(\maxdeg)$.
\end{proof}

Without trivial self-distances, the first solution to be emitted corresponds to an arbitrary edge.
As it takes up to $\Omega(n)$ time to find the first edge, this results in a lower bound on the delay or preprocessing of algorithms for sorted-no-self-APSD.

\begin{restatable}{theorem}{thmUSApsdNoselfLower}
	\label{thm:u-s-APSD-noself-lower}
	No algorithm can solve unweighted sorted-no-self-APSD enumeration with both delay and preprocessing in $o(n)$.
\end{restatable}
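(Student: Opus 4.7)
The plan is to design an adversary that keeps two families of inputs viable until the algorithm is forced to emit its first triple: the empty graph $G_0$ on $n$ vertices, and, for every pair $\{a,b\} \subseteq V$, the graph $G_{a,b}$ consisting of the single edge $\{a,b\}$. This mirrors the style of the lower bounds for SSSD (\autoref{thm:SSSD-lower}) and weighted no-self-APSD (\autoref{thm:w-APSD-noself-lower}), the difference being that here we have no large clique to hide behind---sparseness is exactly what we exploit.

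First I would specify the adversary's responses: to every degree query $\deg(v)$ it replies $0$, and to every adjacency-list access asking for an $i$-th neighbor of $v$ with $i \geq 1$ it replies \emph{out of range}. Both kinds of answers commit the accessed vertex to have degree $0$ in the finalized graph, which is consistent with $G_0$ as well as with any $G_{a,b}$ whose endpoints $a,b$ have not been accessed. If the algorithm performs at most $q$ input accesses before emitting its first triple, then at most $q$ vertices are committed, so whenever $q \leq n-2$ there remain at least two uncommitted vertices and the adversary still has both completions at its disposal.

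Next I would perform a case analysis on the algorithm's first emitted triple $(u,v,d)$. Because the enumeration is sorted by distance and self-distances are excluded, $d$ must equal the minimum non-self shortest distance of the finalized graph, which in the unweighted case is either $1$ (if any edge is present) or $\infty$ (if not). If $d = \infty$, the adversary finalizes as $G_{a,b}$ for any two uncommitted vertices $a,b$; then the triple $(a,b,1)$ should have been emitted first, contradicting the sorted order. If instead $d < \infty$, the adversary finalizes as $G_0$, in which every non-self distance equals $\infty$, so $d$ cannot be correct. In either case the algorithm's first output is wrong.

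Finally I would conclude that if both the preprocessing time and the first delay lie in $o(n)$, the algorithm performs $q = o(n)$ accesses before the first output, so for all sufficiently large $n$ the inequality $q \leq n-2$ holds and the adversary wins; hence either the preprocessing or the delay must be in $\Omega(n)$. The only step requiring a little care is verifying that the adversary's answers are globally consistent with the chosen completion, but this is painless here because every answer amounts to the single assertion \emph{degree $0$}, which is true in $G_0$ for every vertex and in $G_{a,b}$ for every vertex other than $a$ and $b$.
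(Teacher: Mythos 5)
Your proposal is correct and follows the same adversary strategy as the paper's proof. The paper's adversary also returns only isolated vertices for the first $n-2$ queries and uses a graph with exactly one hidden edge among the remaining vertices; any premature output $(s,t,1)$ is refuted by making at least one of $s,t$ isolated, and any output with distance $\neq 1$ already violates sortedness or the no-self constraint. The one small difference is that you additionally keep the empty graph $G_0$ viable, which lets you handle the corner case where the algorithm names exactly the two still-uncommitted vertices as the endpoints of an edge; the paper sidesteps the same corner case by requiring strictly fewer than $n-2$ queries so that a third uncommitted vertex is always available. Both routes give the same $\Omega(n)$ bound, and your variant is arguably a touch more robust in that edge case, but it is not a genuinely different argument.
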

\begin{proof}
	An adversary can construct a graph in which two vertices are connected by an edge and the remaining $n-2$ vertices are isolated.
	For the first $n-2$ queries of the algorithm, the adversary only returns isolated vertices.
	Should the algorithm emit a solution $(s,t,1)$ before having made at least $n-2$ queries, the adversary can claim for at least one of $s$ or $t$, that this vertex is isolated and thus must not be part of any solution.
	If the algorithm emits a solution with distance greater or smaller $1$, it violates the sorted- or the no-self-requirement.
	Therefore, the algorithm has to make at least $\Omega(n)$ queries to the adversary before it can safely produce the first output.
\end{proof}

This bound is tight in the sense that with $O(n)$ preprocessing one can achieve the same delay as for sorted-APSD \emph{with} self-distances.
Basically the preprocessing time is used to locate all edges; afterwards producing and emitting solutions works roughly as for \autoref{thm:u-s-APSD-upper}.

\begin{restatable}{theorem}{thmUSApsdNoselfUpper}
	\label{thm:u-s-APSD-noself-upper}
	With $O(n)$ preprocessing, unweighted sorted-no-self-APSD enumeration can be solved with delay in $O(\maxdeg)$, with $\Theta(n^2)$ lazy-initialized memory and space complexity in $\Theta(n^2)$.
\end{restatable}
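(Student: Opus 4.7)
The plan is to mimic the strategy from \autoref{thm:u-s-APSD-upper}, replacing the head start that was previously given by the $n$ discarded self-distances with a head start bought in an $O(n)$ preprocessing phase from ``initial'' distance-$1$ solutions read directly off the adjacency lists.

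Concretely, during preprocessing I perform a single scan of the vertex set in $O(n)$ time, simultaneously (i)~computing $\maxdeg$, (ii)~for every non-isolated vertex $v$ reading the first entry of its adjacency list to obtain some neighbor $u_v$ and appending the triple $(v,u_v,1)$ to the solution queue $Q$, (iii)~recording the set $I$ of isolated vertices in a separate list, and (iv)~lazy-allocating the $\Theta(n)$-sized distance and visited arrays of a private BFS instance $B_v$ for every non-isolated $v$. This uses $O(n)$ time and $\Theta(n^2)$ lazy-initialized memory.

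The enumeration phase first emits the at most $n$ initial distance-$1$ triples from $Q$ with delay $O(\maxdeg)$; the credit these emissions earn plays the role that the self-distance credit played in \autoref{thm:u-s-APSD-upper} and is used to perform the initial advance of every $B_v$ up to and including the discarded solution $(v,v,0)$. From then on I run the cyclic scheduler of \autoref{thm:u-s-APSD-upper} verbatim with one extra skip rule: whenever $B_v$ would store its own $(v,u_v,1)$ in its private queue $Q_v$, that solution is suppressed and $B_v$ is advanced one more vertex. Since this fires at most once per BFS, each affected iteration merely doubles its $O(\maxdeg)$ cost. Once all BFSs are exhausted, the remaining output consists of $\infty$-triples, which the algorithm streams at constant delay by iterating over all ordered pairs $(s,t)$ and consulting either the distance array of $B_s$ (if $s \notin I$) or the list $I$ directly (if $s \in I$).

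The accounting then carries over from \autoref{thm:u-s-APSD-upper} with only minor arithmetic changes: the $O(n)$ preprocessing budget pays for the scan and the $n$ lazy allocations, the initial distance-$1$ emissions provide the $O(n\maxdeg)$ credit needed to launch the BFSs, and every subsequent scheduler iteration is covered by the $O(\maxdeg)$ credit of the solution it emits. The main obstacle is keeping the argument honest when few non-isolated vertices exist. The degenerate case is an edge-free graph, where $\maxdeg = 0$ and every triple is of the form $(s,t,\infty)$, so the enumeration streams all $n(n-1)$ outputs directly with constant delay and no BFS credit is needed. In general, a BFS is launched only for each non-isolated vertex and every such vertex contributes exactly one initial distance-$1$ emission, so the credit earned at the start balances the credit required to initialize the BFSs. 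This yields delay $O(\maxdeg)$, $\Theta(n^2)$ lazy-initialized memory, and overall space $\Theta(n^2)$, as claimed.
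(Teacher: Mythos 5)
Your proposal is correct and mirrors the paper's strategy closely: use an $O(n)$ preprocessing phase, seed the solution queue with distance-$1$ triples to stand in for the discarded self-distances, initialize BFS instances only for non-isolated vertices, run the cyclic scheduler of \autoref{thm:u-s-APSD-upper} with suppression of duplicates, and handle isolated vertices' $\infty$-triples separately. The one substantive difference is which distance-$1$ triples you use for the head start: you read exactly one outgoing edge per non-isolated vertex during preprocessing (so at most $n$ seed triples, each charged $\Theta(\maxdeg)$), whereas the paper transforms \emph{all} $m$ edges into distance-$1$ triples on the fly during enumeration. Both bookkeepings balance, and yours has the small advantage of requiring no arithmetic on $m$, while the paper's gives a fatter credit reservoir ($\Theta(m\maxdeg)$) that more obviously covers the $O(m\maxdeg)$ worth of duplicate distance-$\leq 1$ productions the BFSs make.

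One place your write-up is imprecise and, if read literally, would be wrong: the final $\infty$-phase. Because the cyclic scheduler is run ``verbatim'' over BFS instances based on \autoref{thm:u-SSSD-upper}, each $B_s$ already emits $(s,t,\infty)$ for every $t$ unreachable from $s$; those must not be emitted a second time. ``Iterating over all ordered pairs $(s,t)$ and consulting the distance array of $B_s$'' would either re-emit those triples or waste $\Theta(n)$ time on a row with no $\infty$ entries, breaking the $O(\maxdeg)$ delay when $\maxdeg \ll n$. The fix (which is what the paper does) is to iterate only over $s$ with $\deg(s)=0$ and emit $(s,t,\infty)$ for all $t \neq s$ at constant delay; the rows for non-isolated $s$ are already fully handled inside the scheduler. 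With that correction, your argument is sound.
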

\begin{proof}
	Within $O(n)$ preprocessing time the algorithm can divide the vertices into those with (out\nobreakdash-)degree 0 and those with at least one outgoing edge; additionally it computes $\maxdeg$.
	Afterwards it transforms the edges into solutions with hop distance 1 within constant time per solution, but charges additional $4 \cdot \maxdeg$~units for each.
	Of this credit, $\min(m,n)\cdot \maxdeg$~units are used to initialize BFS instances on all vertices with at least one outgoing edge; $3m\Delta$~units pay for the at most $\min(m,n)+2m$ iterations in which these instances produce (unwanted or duplicate) solutions with distance $\leq 1$.
	When all BFS instances are fully finished, the algorithm emits all $\infty$-distances of the vertices with (out-)degree 0 within constant time per solution.
	The rest of the analysis remains unchanged, therefore the algorithm solves unweighted sorted-no-self-APSD enumeration with $O(n)$ preprocessing and delay in~$O(\maxdeg)$.
\end{proof}

If the graph has non-negative edge weights, we again use Dijkstra's algorithm instead of BFS to compute the distances.
Running $n$ instances of Dijkstra's algorithm in a cyclic fashion (as done with BFS in the unweighted case) would not guarantee sorted output; instead a priority queue is used to choose the respective Dijkstra instance to be advanced next.

\begin{restatable}{theorem}{thmWSApsdUpper}
	\label{thm:w-s-APSD-upper}
	Sorted-APSD enumeration on graphs with edge weights in $\{0, \ldots, n^c\}$, for some constant $c$, can be solved with delay in $O(\maxdeg + \log(n))$, with $\Theta(n^2)$ lazy-initialized memory and space complexity in $\Theta(n^2)$.
\end{restatable}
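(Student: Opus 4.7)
The plan is to mirror \autoref{thm:u-s-APSD-upper}, replacing BFS with the enumeration variant of Dijkstra's algorithm from \autoref{thm:w-SSSD-upper}. The complication, as already hinted in the text, is that a cyclic round-robin schedule across $n$ Dijkstra instances no longer preserves sorted output, since non-trivial edge weights break the alignment of hop counts that made the cyclic scheme work in the unweighted case. Instead, the algorithm maintains a \emph{global} priority queue that always exposes the globally smallest pending distance across all instances.

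Concretely, the algorithm first emits the $n$ self-distances $(v,v,0)$ in two halves, exactly as in \autoref{thm:u-s-APSD-upper}: the first half covers the computation of $\maxdeg$, the second half is overcharged with $\Theta(\maxdeg + \log(n))$ credit each to cover the setup. During this head start it allocates one Dijkstra instance per source $s \in V$, each backed by a Strict Fibonacci Heap as in \autoref{thm:w-SSSD-upper}, and allocates a global Strict Fibonacci Heap $H$. Each instance is advanced just enough to produce its first solution, namely $(s,s,0)$, which is discarded (since it was emitted already), and then advanced once more to produce a first \emph{pending candidate} $(s,t,d)$ that is held back inside the instance; the pair $(d,s)$ is placed into $H$. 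After this preparation, $H$ contains exactly one entry per non-exhausted source.

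In the main loop, the algorithm repeatedly performs \textsc{extractMin} on $H$ to obtain the pair $(d,s)$ of smallest pending distance, moves the corresponding held-back triple $(s,t,d)$ of the $s$-instance to the output queue $Q$, advances the $s$-instance by one further step to produce a new pending candidate $(s,t',d')$, and \textsc{insert}s $(d',s)$ into $H$; an exhausted instance simply contributes nothing further. Sortedness of the emitted sequence is immediate: each Dijkstra instance individually produces distances in non-decreasing order, so at any moment the held-back candidate of each non-exhausted instance is a valid lower bound on all of that instance's future outputs, and the global \textsc{extractMin} on $H$ therefore yields the overall next smallest distance. Costs per iteration are $O(\log n)$ for \textsc{extractMin} on $H$, $O(1)$ for \textsc{insert} into $H$, and $O(\maxdeg + \log(n))$ for advancing the Dijkstra instance (one internal \textsc{extractMin} plus $O(\maxdeg)$ neighbor scans with $O(1)$ \textsc{decreaseKey}s). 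Charging each non-trivial emitted solution with $\Theta(\maxdeg + \log(n))$ credit therefore suffices, in full analogy to the accounting of \autoref{thm:u-s-APSD-upper}, while the $\Theta(n \cdot (\maxdeg + \log(n)))$ initialization work for the $n$ instances and $H$ is paid for by the head-start surplus. Space-wise, each Dijkstra instance uses $\Theta(n)$ lazy-initialized memory for distances, visited flags, and its private heap, $H$ holds at most $n$ entries, and $Q$ holds $O(n^2)$ entries overall but never grows beyond the lead given by the credit analysis; in total this yields $\Theta(n^2)$ lazy-initialized memory and $\Theta(n^2)$ overall space.

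The main technical obstacle is the invariant that $H$ always accurately represents, for every non-exhausted source $s$, the distance of the next solution that the $s$-instance would produce. The delicate point is the handling of the \emph{discarded} self-distance $(s,s,0)$ during initialization: the head start must be long enough that every instance has already consumed this trivial output and produced its first real pending candidate before the main loop begins, so that $H$ starts out fully populated. Once this is granted, the invariant is preserved trivially by the loop's insert-after-advance rule, and the correctness and delay analysis fall into place.
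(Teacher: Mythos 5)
Your proposal is correct and follows essentially the same route as the paper: both replace the BFS instances with Strict-Fibonacci-Heap--backed Dijkstra instances and replace the cyclic queue of searches with a global priority queue keyed by each instance's next pending distance, with the accounting scaled from $\maxdeg$ to $\maxdeg + \log(n)$. The only cosmetic difference is that you materialize each instance's first pending candidate by advancing it one extra step during setup, whereas the paper seeds the global queue directly with each source's minimum adjacent edge weight --- the same value, obtained without advancing the instance.
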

\begin{proof}
	To derive this result, the algorithm and analysis from \autoref{thm:u-s-APSD-upper} are adjusted:
	The BFS instances are replaced with Dijkstra's algorithm using Strict Fibonacci Heaps as data structure, increasing the time for both the initialization and each loop iteration to $O(\maxdeg + \log(n))$.
	Accordingly, every $\maxdeg$ units paid by solutions before are replaced with $\maxdeg + \log(n)$ units, thereby again covering all computation costs.

	We replace the queue of BFS instances with a priority queue to select in each loop iteration the respective instance of Dijkstra to be advanced.
	The initial key of each priority queue entry is the minimum edge weight adjacent to the start vertex of this Dijkstra instance.
	In each loop iteration, the current instance is determined by \textsc{extractMin}.
	If the instance produces a new solution, the instance is re-inserted into the priority queue with the distance of the new solution as key.
	Therefore, the key of each instance always represents the next solution this instance will contribute to the overall solution queue.
	Repeatedly selecting the minimum among all keys ensures that the algorithm enumerates the solutions sorted by increasing distance.
	By using a priority queue with both \textsc{extractMin} and \textsc{insert} in $O(\log(n))$, those operations are dominated by the existing computation time per loop iteration.

	As each solution is assigned a cost in $O(\maxdeg + \log(n))$, the algorithm solves sorted-APSD enumeration with non-negative edge weights with delay in $O(\maxdeg + \log(n))$.
\end{proof}

Without trivial self-distances, the time until an algorithm can safely report the first solution increases to $\Omega(m + n)$, as this solution corresponds to the edge with minimum weight.

\begin{restatable}{theorem}{thmWSApsdNoselfLower}
	\label{thm:w-s-APSD-noself-lower}
	No algorithm can solve sorted-no-self-APSD enumeration on graphs with edge weights in $\{0, \ldots, n^c\}$, for some constant $c$, with both delay and preprocessing in $o(m + n)$.
\end{restatable}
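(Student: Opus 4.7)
The plan is to establish the $\Omega(m+n)$ lower bound on preprocessing plus the first delay by combining two independent adversarial arguments: one forcing $\Omega(n)$ time and one forcing $\Omega(m)$ time before any correct output can be safely emitted. Whichever of $m$ and $n$ dominates, the corresponding adversary already yields $\Omega(m+n)$.

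The $\Omega(n)$ bound is obtained by lifting the adversary from \autoref{thm:u-s-APSD-noself-lower} to the weighted setting: assign weight $1$ to the single edge and leave the rest of the construction unchanged. The algorithm still needs $\Omega(n)$ degree queries to locate the unique non-isolated pair, and any earlier output can be falsified by relocating that edge between two untouched vertices.

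For the $\Omega(m)$ bound I would follow the flavour of \autoref{thm:w-APSD-noself-lower}. Fix a directed graph with $n$ vertices and $m$ edges, let the adversary reveal adjacencies online (analogous to \autoref{thm:SSSD-lower}), and defer all edge weights. The adversary immediately commits weight $2$ to every queried edge and keeps one weight-$1$ edge in reserve, freely placeable into any unqueried out-neighbour slot. Restricting weights to $\{1,2\}$ rules out distance $0$ between distinct vertices, so it suffices to handle the first output $(s,t,d)$ in two cases. If $d \geq 2$, the adversary deploys its reserved weight-$1$ edge to create a distance-$1$ pair that must precede $(s,t,d)$ in sorted order; if $d = 1$, then either $t$ is an already-committed out-neighbour of $s$ (of weight $2$, immediate contradiction) or the adversary commits all remaining slots to weight $2$, forcing $\dist(s,t) \geq 2$. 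Either way the algorithm must exhaust essentially all $m$ adjacency slots before safely emitting, costing $\Omega(m)$ time.

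Combining: for $m \geq n$ the $\Omega(m)$ adversary already gives $\Omega(m+n)$; for $m < n$ the $\Omega(n)$ adversary, rerun on an $n$-vertex input whose $m$ edges live inside a small substructure with the remaining vertices isolated, does the job. The main technical obstacle is keeping the online adversary simultaneously consistent with some valid input both in terms of uncommitted endpoints and in terms of deferred weights; restricting to directed graphs, where out-degree sequences are fixed up front but each out-neighbour may still be chosen independently, keeps this bookkeeping manageable.
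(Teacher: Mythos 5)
Your argument is correct and rests on the same core adversary idea as the paper -- defer information about the lightest edge so that any purported first output $(s,t,d)$ in sorted order can be invalidated by choosing a completion with a smaller positive distance, forcing the algorithm to inspect essentially the whole adjacency structure before it is safe to emit. The differences are structural rather than substantive. You decompose into two independent adversaries ($\Omega(n)$ from isolated vertices with one hidden edge, $\Omega(m)$ from revealing all but one edge) and take whichever dominates, whereas the paper runs a single combined adversary that first presents all isolated vertices and then reveals $m-1$ edges at a fixed heavy weight, so both contributions accrue inside one construction without a case split. Your weight-hiding scheme is the dual of the paper's: you commit weight $2$ to every revealed edge and leave the \emph{location} of a reserved weight-$1$ edge free, while the paper commits weight $3$ to every revealed edge and leaves the \emph{weight} (in $\{1,2\}$) of the last edge free; both give exactly the needed dichotomy ($d\ge 2$ falsified by producing a distance-$1$ pair, $d=1$ falsified by keeping every weight $\ge 2$). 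Two small remarks: your $d=1$ sub-case split is redundant -- simply committing all remaining edges to weight $2$ already covers both branches -- and the closing caveat about bookkeeping consistency of the online adversary, and the restriction to directed graphs, is not actually needed once you adopt the paper's simpler tactic of fixing the undetermined quantity to a single edge revealed last; the paper's theorem is stated for both directed and undirected graphs, so you would want to drop that restriction.
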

\begin{proof}
	An adversary can construct an arbitrary graph with $n$ vertices and $m$ edges.
	If there are isolated vertices, the adversary presents those first before revealing the first edge to the algorithm.
	For the first $m - 1$ neighborhood queries of the algorithm that result in an actual edge, the adversary returns $3$ as edge weight.
	Thus, should the algorithm emit a solution $(s,t,d)$ before having made at least $m + n$ queries, the adversary is free to choose the remaining edge weight as either $1$ or $2$ and thereby falsify the algorithm's output:
	If $d \geq 2$, the adversary chooses edge weight $1$ and the algorithm violates the sorted-requirement.
	For $d = 1$, the adversary sets the last edge weight to $2$ and thus $s$ and $t$ have at least distance $2$.
	Therefore, the algorithm has to make at least $\Omega(m + n)$ queries to the adversary before it can safely produce the first solution.
\end{proof}

Again we can match the delay upper bound for sorted-APSD while only using the minimum amount of preprocessing, as $O(m + n)$ preprocessing time is enough for the algorithm to perform all initialization steps and afterwards run the loop of Dijkstra instances unchanged.

\begin{restatable}{theorem}{thmWSApsdNoselfUpper}
	\label{thm:w-s-APSD-noself-upper}
	With $O(m + n)$ preprocessing, sorted-no-self-APSD enumeration on graphs with edge weights in $\{0, \ldots, n^c\}$, for some constant $c$, can be solved with delay in $O(\maxdeg + \log(n))$, with $\Theta(n^2)$ lazy-initialized memory and space complexity in $\Theta(n^2)$.
\end{restatable}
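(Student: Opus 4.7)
The plan is to adapt the algorithm from Theorem~\ref{thm:w-s-APSD-upper} by using the additional $O(m+n)$ preprocessing budget to perform, up-front, exactly those setup steps that the self-distance variant could previously hide behind the freely available triples $(s,s,0)$, and then to skip the self-distance inside each Dijkstra instance on the fly.

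During preprocessing I would, in $O(m+n)$ total time: (i) scan all vertices to compute $\maxdeg$ and identify the vertices $s$ with $\deg(s) = 0$; (ii) for every non-isolated $s$, walk its adjacency list once to obtain $w_{\min}(s) = \min\{w(s,v) : (s,v) \in E\}$; and (iii) build the outer priority queue $B$ of Dijkstra instances, inserting each non-isolated $s$ with key $w_{\min}(s)$. Using a Strict Fibonacci Heap, the $n$ \textsc{insert} calls cost only $O(n)$ in total, so the entire preprocessing stays within $O(m+n)$. The inner Dijkstra instances themselves are left uninitialized, marked as pending, and will be created lazily when first needed.

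Enumeration then proceeds almost verbatim as in Theorem~\ref{thm:w-s-APSD-upper}. Whenever an instance for $s$ is extracted from $B$ for the first time, I initialize it on the spot: create its internal heap, insert $s$ with distance $0$, \textsc{extractMin} (silently discarding the self-distance), and relax $s$'s outgoing edges. The next \textsc{extractMin} then yields the first non-self solution, which is emitted; all subsequent iterations for $s$ behave exactly as in Theorem~\ref{thm:w-s-APSD-upper}, using $B$ to choose, in each round, the instance whose next solution is globally smallest. Once $B$ becomes empty, the remaining $(s,t,\infty)$ triples (those with $\deg(s) = 0$, together with those $t$ unreachable from some non-isolated $s$, detected by one pass over each Dijkstra's lazily-initialized distance row) are emitted with constant delay each; these are necessarily the largest elements in the sorted order.

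For the accounting, the $O(m+n)$ preprocessing budget plus the initial unit of delay covers the first \textsc{extractMin} on $B$ and, if necessary, the first lazy initialization before the very first output is produced. From there on, every emitted solution is charged $O(\maxdeg + \log(n))$ credit; a lazy initialization happens at most once per vertex, so its $O(\maxdeg + \log(n))$ overhead is absorbed by the first solution contributed by that instance. The main subtlety I expect to verify is that deferred initialization does not disturb the sorted enumeration order: this holds by construction, because $w_{\min}(s)$ is precisely the distance at which instance $s$ will emit its first non-self solution, so the key invariant of $B$ used in Theorem~\ref{thm:w-s-APSD-upper} (each entry's key equals the next distance its instance will contribute) is maintained from the outset. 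The $\Theta(n^2)$ memory and lazy-initialized memory bounds carry over unchanged from Theorem~\ref{thm:w-s-APSD-upper} since the per-instance data structures are identical.
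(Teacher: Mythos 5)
Your proposal follows the same approach as the paper: shift the set-up work into the $O(m+n)$ preprocessing phase, seed the outer priority queue $B$ with the keys $w_{\min}(s)$, and then run the interleaved Dijkstra loop of \autoref{thm:w-s-APSD-upper} while suppressing the self-distances. The paper states this more tersely — it simply performs \emph{all} initialization (including creating each Dijkstra instance, discarding its $(s,s,0)$, and relaxing $s$'s edges) during preprocessing, which already fits in $O(m+n)$ since each instance's set-up costs only $O(\deg(s)+1)$; your additional layer of lazy per-instance initialization is therefore not needed, though it is not wrong, and your observation that $w_{\min}(s)$ equals the distance of $s$'s first non-self solution is exactly the invariant that makes the keys of $B$ valid before the instances have been advanced.

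One small point worth tightening: the $(s,t,\infty)$ triples are not genuinely emitted ``with constant delay each'' after $B$ empties — the concluding scan over the lazily-initialized distance rows may skip long stretches of reachable pairs without producing output, so the correct argument (as in the SSSD base case) is that the credit each instance accumulates while enumerating its reachable vertices pays for skipping them in the final pass, while the $\infty$-outputs pay for themselves. Since your accounting otherwise mirrors \autoref{thm:w-s-APSD-upper}, this is a presentational rather than a substantive gap.
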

\begin{proof}
	We adapt the algorithm from \autoref{thm:w-s-APSD-upper} by moving all initialization to the $O(m + n)$ preprocessing phase while skipping self-distance solutions.
	Afterwards, the loop of Dijkstra instances runs unchanged and produces solutions with delay in $O(\maxdeg + \log(n))$.
\end{proof}

\section{A Note on Undirected Graphs}
\label{sec:undirected-APSD}

When applied to an undirected graph, an APSD enumeration algorithm would emit the distance between each unordered pair of vertices $u \neq v$ twice.
We can however always adjust the algorithm to instead, for all vertices $u,v$, emit exactly one of $(u,v,\dist(u,v))$ and $(v,u,\dist(v,u))$, without compromising the asymptotic delay:
The adjusted algorithm only emits the solution found first and skips the second.
By emitting all solutions with twice the delay, each first solution pays for the omission of the respective second occurrence.

Whilst such tie-breaking can easily always be done with $\Theta(n^2)$ lazy-initialized memory in which the produced solutions are tracked, all algorithms presented in this paper allow for a constant-time tie-breaking without additional memory.
This is essentially due to the fact that our APSD enumeration algorithms can control the order in which the SSSD enumeration algorithms they build on are executed:
\begin{itemize}
	\item For the unconstrained algorithms presented in \autoref{sec:APSD} and the algorithms for reachable-APSD enumeration of \autoref{cor:r-apsd}, selecting the start vertices for the SSSD algorithms with increasing vertex id and only emitting solutions $(u,v,\cdot)$ with $u \leq v$ suffices, as apart from the trivial self-distances all solutions are produced by the SSSD algorithms.
	\item The algorithm of \autoref{thm:u-APSD-noself-upper} for unweighted no-self-APSD enumeration can be adjusted the same way; additionally the initial pass over all vertices and edges has to respect the same vertex order.
	\item Weighted no-self APSD enumeration according to \autoref{thm:w-APSD-noself-upper} changes the vertex order according to their node degree.
	As BucketSort is stable, all following operations iterate over vertices sorted by node degree first and vertex id second.
	If tie-breaking uses the same ordering, duplicate solutions can again be identified in constant time.
	\item The algorithms for sorted APSD enumeration in \autoref{sec:s-APSD} executes SSSD algorithms in parallel instead of consecutively, but can again apply tie-breaking by vertex id to select the respective SSSD algorithm instance to advance next.
\end{itemize}

\section{Open Problems}
\label{sec:conclusion}
In this paper we only started investigating what an enumeration point of view can reveal about the problems SSSD and APSD.
Improvements on our results, different model assumptions, as well as other input types remain wide open for future research.

In \autoref{cor:fast-w-APSD-upper} we explained how to transfer improvements of SSSD algorithms for weighted graphs to unconstrained-APSD enumeration.
It is however open, whether such improvements could be applied to speed up the delay for row-wise-APSD/SSSD enumeration on weighted graphs.
With the lower bound of \autoref{thm:SSSD-lower} in mind this leaves the question, whether one can get rid of the additional $O(\log(n))$ steps per delay.

For weighted no-self-APSD enumeration we showed a lower bound of $\Omega(\Delta)$ on either preprocessing or delay in \autoref{thm:w-APSD-noself-lower}, but only manage to solve the problem with preprocessing in $O(n)$ in \autoref{thm:w-APSD-noself-upper}.
Closing this gap remains an interesting open problem.

If one lifts the restriction that input-data is read-only, there might be room for improvements in the space complexity, as Kammer and Sajenko recently showed how to solve BFS in-place in linear time \cite{kammerLinearTimeInPlaceDFS2019a}.

At last, we have not at all addressed SSSD or APSD for graphs with negative edge weights.
Observe that in this general case, even emitting the first distance answers the question if a strongly connected graph contains a negative cycle.
Whilst this problem seems to be easier than computing distances, state of the art algorithms require about as much time to decide the presence of negative cycles as solving SSSD or even APSD completely (see e.\,g. the prior work section in \cite{bringmannImprovedAlgorithmsComputing2017}).

%%
%% Bibliography
%%

%% Please use bibtex,

\bibliography{shortest-distances-as-enumeration-problem}
\end{document}